%
%
%
%
%
%
%
\documentclass[%
 reprint,
 amsmath,amssymb,
 aps,
]{revtex4-2}

\usepackage{graphicx}
\usepackage{dcolumn}
\usepackage{bm}
\usepackage{fixltx2e}
\usepackage{xcolor}

\usepackage{amsthm}
\usepackage{braket}
\usepackage[english]{babel}
\newtheorem{theorem}{Theorem}
\newtheorem{observation}[theorem]{Observation}
\newtheorem{proposition}[theorem]{Proposition}
\newtheorem{lemma}[theorem]{Lemma}
\newtheorem{corollary}[theorem]{Corollary}
\newcommand{\vtwo}[1]{{\color{black} #1}}
\newcommand{\vthree}[1]{{\color{black} #1}}


\begin{document}

\preprint{APS/123-QED}

\title{Nonlocal growth of quantum conditional mutual information under decoherence}


\author{Yifan Zhang}
\email{yz4281@princeton.edu}
 
\author{Sarang Gopalakrishnan}%

\affiliation{Department of Electrical and Computer Engineering, Princeton University, Princeton, NJ 08544}
\date{\today}

\begin{abstract}

Local measurements cannot create entanglement, but they can convert short-range entanglement to long-range entanglement, as in quantum teleportation. This phenomenon of measurement-induced entanglement (MIE) has been widely discussed in recent work on measurement-induced entanglement phase transitions and related phenomena. Here, we situate MIE in a broader context of the growth of long-range conditional mutual information (CMI) under decoherence. We upper-bound the rate at which decoherence can generate long-range CMI, and derive a characterization of states that saturate this bound. We point out that the structure of states saturating the CMI upper bound can be very different under different decoherent dynamics and provide explicit examples. We additionally explore the dynamics of CMI in random quantum circuits subject to random local decoherence, as a function of circuit depth. We argue that the universality class of the finite-depth teleportation transition, and its lower critical dimension, are different for erasures than for measurements.


\end{abstract}

\maketitle

\section{Introduction}

Local measurements in a region cannot create entanglement between the region and its surroundings but can modify the structure of entanglement. For example, if Bob shares a Bell pair with Alice and another with Charlie, he can perform a Bell measurement on his qubits and create entanglement between Alice and Charlie, a protocol known as Bell teleportation \cite{nielsen_chuang}. Similar phenomena exist in large quantum systems, where a single layer of two-qubit measurements can generate long-range correlations \cite{PhysRevLett.81.5932,PhysRevLett.86.5188}. This nonlocal effect of measurements contrasts with the spreading of correlations in either local unitary circuits or circuits made up of local quantum channels---in the latter case, a Lieb-Robinson bound restricts correlations to grow inside a light cone \cite{cmp/1103858407,10.1103/physrevlett.97.050401}, while the corresponding bounds for circuits with measurements are much weaker \cite{10.48550/arxiv.2206.09929}. Because of the nonlocal effects of measurements, monitored systems can exhibit striking phenomena such as finite-time teleportation transitions~\cite{10.48550/arXiv.2110.06963}. 
This phenomenon of ``measurement-induced entanglement'' (MIE), by which measurements transmute short-range entanglement into long-range entanglement, has been explored in a number of contexts lately---for example, as a way of characterizing the structure of many-body quantum states \cite{10.22331/q-2023-02-02-910,10.48550/arXiv.2307.02292,10.1103/physreva.71.042306,10.1103/physrevlett.92.027901,10.1103/physrevlett.92.087201,10.48550/arXiv.2312.11615}. 

The nonlocal effects of measurements might seem paradoxical, given that the entire process of measuring a system and recording the measurement outcome can be regarded as a quantum channel. However, the sign of measurement-induced correlations is outcome-dependent, so averaging over outcomes washes out the correlations. To use the nonlocal correlations generated by measurement, one needs nonlocal classical communication and feedback (as in teleportation) or post-selection on particular measurement outcomes \cite{10.1103/physrevlett.127.220503,10.48550/arxiv.2206.13527,10.48550/arxiv.2205.01933,10.48550/arXiv.2208.11699,PRXQuantum.4.020315,10.48550/arxiv.2112.01519,2112.03061,10.48550/arxiv.2209.06202,zhu2023nishimori,chen2023realizing}. MIE is usually defined in a state that is conditioned on a set of measurement outcomes---and is thus defined as a property of individual \emph{trajectories} of the measurement channel, rather than an intrinsic property of the channel \vthree{(See Appendix \ref{teleport_metric} for the precise definition of MIE).} However, the fact that long-range correlations are generated regardless of the outcome suggests that there might be some intrinsic property of the channel itself that is responsible for creating them, regardless of how it is unraveled into trajectories. Characterizing MIE in terms of properties of a quantum channel would have the added benefit of making it possible to extend ideas about MIE to more general forms of interaction between a system and its environment.

The present work develops the quantum \emph{conditional mutual information} (CMI) as a diagnostic of such long-range correlations. We show that CMI (like MIE) can increase nonlocally as a result of decoherence, and relate it to MIE in the special case of the measurement channel. We discuss how the CMI evolves under general quantum channels, providing bounds both for general quantum channels and for the specific case of measurements. In addition, we describe the structure of quantum states for which the rate of CMI growth is maximal; we call these states ``efficient teleporters'', generalizing the concept of teleportation to generic quantum channels. For measurement channels, we identify a class of states---related to the Hayden-Preskill black hole paradox \cite{10.1088/1126-6708/2007/09/120}---for which measuring \emph{any} qubits gives rise to maximal CMI growth. On the other hand, we show that similar properties cannot be present in states subject to erasure channels, highlighting how the teleportation properties of different channels can be vastly different. Finally, we explore the rate at which quantum channels induce CMI growth in states generated by random quantum circuits as a function of circuit depth, interpolating between the teleportation transition and a ``Page'' transition in the limit of deep circuits. We numerically probe these transitions and the interpolation between them. We argue that the teleportation transitions for measurement and erasure channels lie in distinct universality classes.

This work is organized as follows. In Sec.~\ref{cmi} we introduce the basic properties of the classical and quantum CMI, including the relation between CMI and MIE. In Sec.~\ref{structure} we bound the growth of CMI for a single general decoherence event as well as for a single instance of a measurement channel, and characterize the structure of the states that saturate these bounds. In Sec.~\ref{hayden} we discuss examples of efficient teleporters and elaborate on the difference between erasure channels and measurement channels. Finally, in Sec.~\ref{ruc} we explore transitions in how many measurements one needs to create long-range correlations in a quantum circuit, as a function of circuit depth.

\vthree{
\section{Overview of Results}
We provide an overview of the results presented in this manuscript. We consider a tripartite system $ABC$, where $B$ is subject to a channel $\mathcal{N}[\cdot]$. After the channel, we observe that the difference in the CMI before and after the channel $\delta I = I(A:C|\mathcal{N}[B])-I(A:C|B)$ can be either positive or negative. The precise definition of CMI can be found in the next section. 

We focus on the situation where CMI grows under the channel, in other words $\delta I$ is positive. Our objective is threefold: first, to establish general bounds on how much $\delta I$ can increase; second, to understand the types of states that can saturate these bounds; and third, to understand how $\delta I$ grows under typical dynamics. We obtain the following results:

\begin{enumerate}
    \item We prove an upper bound of $\delta I$ that depends on the channel $\mathcal{N}[\cdot]$ (Theorem \ref{miebound0}). When the channel is a pure quantum instrument, we obtain a tighter bound (Theorem \ref{miebound1}).
    \item When the initial state is pure and $\delta I$ saturates the upper bound, we provide a structural theorem characterizing the state (Theorem \ref{erasure_structure} for generic channels and Theorem \ref{thm:structural_theorem} for pure quantum instruments). On the other hand, we show that mixed states can also saturate the upper bound but violate the structural theorems (Section \ref{mixed_state}).
    \item We discuss "efficient teleporters": they can saturate the CMI upper bound by applying a single-qubit channel to \emph{any} qubit in $B$ (section \ref{hayden}). We give an example of efficient teleporter under qubit measurements, inspired by the Hayden-Preskill protocol \cite{10.1088/1126-6708/2007/09/120}. We also show that efficient teleporters under erasure channels have a more restricted structure (Theorem \ref{efficient_teleporter_erasure}).
    \item We discuss the growth of CMI in states generated by random circuits, subject to qubit measurements and erasures (section \ref{ruc}). In both cases, we reveal a critical depth above which the growth of $\delta I$ almost saturates the upper bound. This can be considered as a "teleportation" phase transition.
\end{enumerate}

We will primarily focus on qubit systems, although the information-theoretic results are general.
}

\section{Conditional mutual information}\label{cmi}

\subsection{Decohered Quantum CMI}\label{quantum_cmi}
In classical information theory, Conditional Mutual Information (CMI) \(I(X:Z|Y)\) characterizes the correlation between two random variables, $X$ and $Z$, given the knowledge of a third variable, $Y$. Classically, the CMI is simply the mutual information between $X$ and $Z$ in the conditional distribution $P(X,Z|Y)$. In terms of Shannon entropies $H(\cdot)$, one can write the CMI as $I(X:Z|Y) = H(XY) + H(YZ) - H(Z) - H(XYZ)$. As a mutual information with respect to a well-defined probability distribution, CMI obeys the data processing inequality: $I(\mathcal{N}[X]:Z|Y) \leq I(X:Z|Y)$, where $\mathcal{N}[X]$ is a Markov chain acting on $X$. 

There is no quantum analog of a conditional distribution. However, one can define quantum analogs of quantities like the CMI by the standard strategy of writing the classical quantity out in terms of Shannon entropies and replacing the Shannon entropies by Von Neumann entropies, defined as follows:
\vthree{
\begin{equation}
    S(\rho) = -Tr[\rho \log(\rho)]
\end{equation}
Where $\rho$ denotes the density matrix. We use the base-two logarithm throughout this manuscript.
}
Thus, the quantum CMI is \emph{defined} as $I(A:C|B) = S(AB) + S(BC) - S(B) - S(ABC)$. It obeys the same data processing inequalities as the classical CMI as a consequence of the strong subadditivity of quantum entropy \cite{10.1063/1.1666274}: \(I(\mathcal{N}[A]:C|B) \le I(A:C|B)\), where \(\mathcal{N}\) is now a quantum channel. Nevertheless, when the \emph{conditioning system} \(B\) undergoes decoherence, the CMI can either increase or decrease. We now provide concrete examples of each behavior. Both examples involve density matrices that are diagonal in the computational basis. Thus the fact that $I(A:C|B)$ can be either larger or smaller than $I(A:C|\mathcal{N}[B])$ is also true for classical probability distributions.

\textbf{1. Decrease in CMI} Consider a tripartite system with the following density matrix:
\begin{equation}
    \begin{split}
        \rho_{ABC} &= \frac{1}{4}\ket{0}\bra{0}_A \otimes \ket{0}\bra{0}_B \otimes \ket{0}\bra{0}_C\\
                   &+ \frac{1}{4}\ket{0}\bra{0}_A \otimes \ket{1}\bra{1}_B \otimes \ket{1}\bra{1}_C \\
                   &+ \frac{1}{4}\ket{1}\bra{1}_A \otimes \ket{1}\bra{1}_B \otimes \ket{0}\bra{0}_C \\
                   &+ \frac{1}{4}\ket{1}\bra{1}_A \otimes \ket{0}\bra{0}_B \otimes \ket{1}\bra{1}_C
    \end{split}
\end{equation}
Here, systems A and C are independent classical bits with equal probabilities of being 1 or 0, while system B encodes their parity. It is straightforward to verify that \(I(A:C|B) = 1\), as knowing the parity in B correlates A and C according to the parity constraint. If the channel \(\mathcal{N}\) discards the bit in B, then \(I(A:C|\mathcal{N}[B]) = I(A:C) = 0\), since A and C revert to being uncorrelated without the parity information. This example demonstrates how decohering B can lead to a decrease in CMI.

\textbf{2. Increase in CMI} Consider another tripartite system with this density matrix:
\begin{equation}
\begin{split}
    \rho_{ABC} &= \frac{1}{2}\ket{0}\bra{0}_A \otimes \ket{0}\bra{0}_B \otimes \ket{0}\bra{0}_C\\
               &+ \frac{1}{2}\ket{1}\bra{1}_A \otimes \ket{1}\bra{1}_B \otimes \ket{1}\bra{1}_C
\end{split}
\end{equation}
In this scenario, systems A, B, and C are classical bits that are simultaneously either all 0 or all 1 with equal probability. It is straightforward to verify that \(I(A:C|B) = 0\), as the knowledge of B's value completely determines A and C, leaving no additional correlation between them. However, if the channel \(\mathcal{N}\) discards B's bit, \(I(A:C|\mathcal{N}[B]) = I(A:C) = 1\). Without the value of B, A and C become classically correlated bits. This example illustrates how decohering B can lead to a increase in CMI.

We comment that quantum CMI is already widely used in literature to probe nonlocal data of highly entangled quantum states. For instance, the topological entanglement entropy is defined as CMI among three adjacent regions \cite{PhysRevLett.96.110404,PhysRevLett.96.110405}. 
The connection between, CMI, quantum Markov chains, and recovery channels \cite{10.1093/qmath/39.1.97,10.1063/1.1459754,PhysRevA.77.034101} also sees application in understanding the entanglement structure of Gibbs states \cite{10.1007/s00220-019-03485-6,PhysRevLett.124.220601}, extracting universal data in fixed point ground states \cite{10.1016/j.aop.2020.168164,PhysRevB.103.115150,PhysRevLett.126.141602}, as well as solving quantum marginal problem in many-body systems \cite{PhysRevX.11.021039,10.48550/arXiv.2109.11688}. Here, we are interested in the dynamical aspects of CMI, namely how it increases upon decohering the conditioning system which is largely unexplored in many-body systems. To start with, we connect CMI to the concept of MIE by rewriting it as a CMI when the conditioning system undergoes a pure quantum instrument.

\subsection{CMI under measurements}

We now discuss the effects on $I(A:C|B)$ of generalized measurements on $B$. We implement measurements by means of a channel called a pure quantum instrument, which works as follows: we enlarge $B$ to include an ancillary register $X$, which is initially disentangled from $B$. A unitary is applied on $B \otimes X$, and then a dephasing channel is applied to $X$. Suppose $X$ has a basis $\{ \ket{i} \}, i = 1 \ldots k$. Then the pure quantum instrument $\mathcal{N}$ acts as follows:
\vtwo{
\begin{equation}
\mathcal{N}[\rho_B] = \sum_{i = 1}^k K_i \rho_B K_i^\dag \otimes \ket{i}\bra{i}_X,
\end{equation}
where \(K_i\) are Kraus operators subject to the condition \(\sum_i K_i^\dag K_i =I\). The probability of observing outcome i is given by \(p_i=\mathrm{Tr}[K_i \rho_B K_i^\dag]\). 
Note that when acting on a pure state, a pure quantum instrument results in pure states after post-selection. We will only consider pure quantum instruments in this work; more general classes of instrument can be defined~\cite{wilde2013quantum} but are outside the scope of this work.}

After a pure quantum instrument, conditioning on $B \otimes X$ is equivalent to \emph{classically} conditioning on the measurement outcome, as the following observation shows:

\begin{proposition}
\label{mie_cmi}
Consider an initial tripartite pure state $\ket{\psi} \in ABC$. Suppose a pure quantum instrument $\mathcal{N}$ acts on some subset of $B$, giving a classical measurement outcome $i \in 1\ldots k$ that is recorded in an ancillary register $X$. Then the CMI $I(A:C|\mathcal{N}[B])$ is precisely the average (weighted with Born probabilities) of the measurement-induced mutual information:

\begin{equation}
I(A:C|\mathcal{N}[B]) = \sum_{i = 1}^k p_i I(A:C|i).
\end{equation}
Here, $I(A:C|i)$ is the mutual information of $\ket{\psi_i}$, the pure state on $ABC$ corresponding to measurement outcome $i$.
\end{proposition}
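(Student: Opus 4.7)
The plan is to write the post-instrument state on $ABCX$ explicitly as a classical-quantum state indexed by the outcome $i$, then use the fact that entropies of such states decompose cleanly into a Shannon part plus a weighted average of the branch entropies. Purity of each branch $\ket{\psi_i}_{ABC}$ will let me collapse all four entropies in the CMI into single-subsystem entropies, after which the Shannon $H(p)$ contributions cancel.

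Concretely, because the instrument records its outcome in an orthonormal register, applying $\mathcal{N}$ to $\ket{\psi}\bra{\psi}_{ABC}$ yields
\begin{equation}
\rho_{ABCX} \;=\; \sum_{i=1}^{k} p_i\, \ket{\psi_i}\bra{\psi_i}_{ABC} \otimes \ket{i}\bra{i}_X,
\end{equation}
where $\ket{\psi_i}$ is the normalized pure state obtained when outcome $i$ occurs, and $p_i = \mathrm{Tr}\,\mathcal{E}_i[\rho_B]$ is its Born probability. I then invoke the standard identity $S\!\left(\sum_i p_i\, \sigma_i \otimes \ket{i}\bra{i}\right) = H(p) + \sum_i p_i S(\sigma_i)$ for each of the four reduced states appearing in $I(A:C|BX) = S(ABX) + S(BCX) - S(BX) - S(ABCX)$.

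Now I use that each $\ket{\psi_i}$ is pure on $ABC$: this forces $S(\rho^i_{ABC})=0$, $S(\rho^i_{AB}) = S(\rho^i_{C})$, $S(\rho^i_{BC}) = S(\rho^i_{A})$, and leaves $S(\rho^i_B)$ untouched. Substituting, the four $H(p)$ terms cancel (they appear with signs $+,+,-,-$), and the CMI reduces to
\begin{equation}
I(A:C|BX) \;=\; \sum_{i} p_i \bigl[\,S(\rho^i_A) + S(\rho^i_C) - S(\rho^i_B)\,\bigr].
\end{equation}
The bracketed quantity is precisely $I(A:C|B)_{\psi_i} = S(\rho^i_{AB}) + S(\rho^i_{BC}) - S(\rho^i_B) - S(\rho^i_{ABC})$ evaluated on the pure tripartition $\ket{\psi_i}$, which for a pure state on $ABC$ equals the ordinary mutual information $I(A:C)_{\psi_i}$ and matches the claimed $I(A:C|i)$.

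There is no serious obstacle here; the statement is essentially bookkeeping once one recognises that the instrument produces a classical-quantum state with a classical flag. The only point that deserves care is making the identification of $BX$ (which is what $\mathcal{N}[B]$ denotes) and verifying that the $H(p)$ contributions from the four entropy terms cancel exactly rather than leaving a residual Holevo-like piece; this cancellation relies on $X$ being a single register that appears in all of $ABX$, $BCX$, $BX$, and $ABCX$, so that its classical entropy is counted with the right signs.
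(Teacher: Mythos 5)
Your proof is correct and follows essentially the same route as the paper's: write the post-instrument state as a classical-quantum state flagged by the register $X$, apply the entropy decomposition $S(\sum_i p_i \sigma_i \otimes \ket{i}\bra{i}) = H(p) + \sum_i p_i S(\sigma_i)$ to each of the four terms in the CMI, and use purity of each branch to reduce the result to the Born-averaged mutual information. Your version simply spells out the bookkeeping (the cancellation of the $H(p)$ terms and the purity-based entropy identities) that the paper leaves implicit.
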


\begin{proof}
The channel \(\mathcal{N}\) decoheres \(B\) and appends a classical register \(X\) to store the measurement outcome. After applying \(\mathcal{N}\), the density matrix on $A\mathcal{N}[B]C$ is
\begin{equation}
\label{eq:density_matrix}
    \rho_{A\mathcal{N}[B]C}=\sum_{i=1}^{k} p_i \ket{i}\bra{i}_{X} \otimes (\ket{\psi_i} \bra{\psi_i})_{ABC}.
\end{equation}
\vtwo{where \(\ket{\psi_i}_{ABC} \propto K_i \ket{\psi}_{ABC}\) are the post-selected pure states.} Because of the classical register, this density matrix is an ensemble of states that are orthogonal on $BX$ (and therefore on $ABX, CBX$). The classical register makes these states perfectly distinguishable on any subsystem containing $X$, so it is simple to see that (for example) $S(ABX) = \sum_i p_i S(\rho_{ABX,i}) + S(X)$, and likewise with the other terms. Plugging this into the definition of CMI, we find that \(I(A:C|\mathcal{N}[B])=\sum_{i=1}^{k} p_i I(A:C|i)\), where $i$ denotes classical conditioning, i.e., $I(A:C|i)$ means $I(A:C)$ evaluated in $\ket{\psi_i}$. 
\end{proof}

The generalization of this result to mixed states is:

\begin{observation}
\label{thm:MIE_bound_mixed}
For a mixed state $\rho \in ABC$, the CMI after the application of a pure quantum instrument is the weighted average of the CMI in each post-measurement state $\rho_i = M_i \rho M_i^\dagger$:

\begin{equation}
I(A:C|\mathcal{N}[B]) = \sum_{i = 1}^k p_i I(A:C|B, i).
\end{equation}
\end{observation}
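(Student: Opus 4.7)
The plan is to mirror the proof of Proposition~\ref{mie_cmi} essentially verbatim, since the only property of the pure case that actually entered the entropy bookkeeping was the orthogonality of the classical register $X$, not the purity of the post-measurement branches on $ABC$.

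First, I would expand the post-instrument state as
\(\rho_{A\mathcal{N}[B]C} = \sum_{i=1}^k p_i\, \ket{i}\bra{i}_X \otimes \rho_{ABC,i}\),
where $\rho_{ABC,i} = M_i \rho M_i^\dagger / p_i$ is the normalized post-measurement state on $ABC$ and $p_i = \mathrm{Tr}[M_i \rho M_i^\dagger]$ is the Born probability. Writing the branches in normalized form at this stage is what guarantees that the weights appearing in the final sum are the $p_i$ rather than the unnormalized Kraus pieces.

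Next, I would invoke the standard block-diagonal entropy identity: for any subregion $S$ that contains $X$, the orthogonality of $\{\ket{i}_X\}$ makes the reduced state a direct sum over $i$, so
\(S(S) = H(\{p_i\}) + \sum_i p_i\, S(\rho_{S\setminus X,\,i})\),
with $H(\{p_i\})$ the Shannon entropy of the outcome distribution. Applying this to each of the four entropies in $I(A:C|BX) = S(ABX) + S(BCX) - S(BX) - S(ABCX)$, the $H(\{p_i\})$ contribution cancels with net coefficient $1+1-1-1 = 0$, leaving
\(\sum_i p_i\bigl[S(AB)_i + S(BC)_i - S(B)_i - S(ABC)_i\bigr] = \sum_i p_i\, I(A:C|B)_i\),
which is exactly the claim.

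I do not expect any serious obstacle here: the argument is a direct lift of Proposition~\ref{mie_cmi}, with the pure-state entropies on each branch simply replaced by the von Neumann entropies of the mixed branches $\rho_{ABC,i}$. The only thing to be careful about is the normalization bookkeeping described above; once $\rho_{ABC,i}$ is written as a normalized density matrix, the Born weights $p_i$ fall out automatically and no further estimate, inequality, or use of strong subadditivity is needed.
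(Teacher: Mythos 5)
Your proposal is correct and is essentially the argument the paper intends: the paper proves Proposition~\ref{mie_cmi} via exactly this block-diagonal entropy decomposition over the orthogonal register states $\ket{i}_X$, and states the mixed-state Observation as an immediate lift of that calculation, with the purity of the branches never actually used. Your explicit bookkeeping — normalizing the branches so the Born weights appear, and noting that the Shannon term $H(\{p_i\})$ cancels with coefficient $1+1-1-1=0$ across the four entropies — fills in precisely what the paper leaves implicit.
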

Thus, when a mixed state is measured, the CMI after measurement is the quantum CMI of the state $\rho_i$---conditional on the classical measurement outcome $i$---averaged with Born weights over outcomes $i$.

\section{Evolution of CMI under decoherence}\label{structure}


We now discuss 
the increase in CMI \(I(A:C|B)\) under generic quantum channels when the conditioning system \(B\) undergoes decoherence. Specifically, we consider the following question: under a given channel, what is the maximal amount of CMI growth possible, and what classes of initial states allow for the maximal CMI growth? 
Our main results are: (i)~the CMI growth is upper bounded by the amount of entropy that is added to the system's density matrix by acting with the channel, and (ii)~states that saturate this bound are those in which distinct eigenvectors of the post-measurement state correspond to distinct orthogonal projectors on \( AC \).

%


\subsection{General channels and pure states}\label{generic_channel_pure}

In this section we discuss how CMI evolves when a quantum channel acts on the conditioning system. Specifically we will consider an initially pure state on the tripartite system $ABC$. Before decoherence, CMI is simply the mutual information $I(A:C|B)=I(A:C) = S(A) + S(C) - S(AC)$, since $S(A) = S(BC)$, $S(C) = S(AB)$, $S(AC) = S(B)$, and $S(ABC) = 0$ for a globally pure state. We are interested in how $I(A:C|B)$ changes when a quantum channel acts on $B$. The quantum channel can be dilated into an isometry from $B$ to an enlarged space, $BE$, where $E$ is an environment that one traces over to define the quantum channel. It will be useful to work in the dilated picture, i.e., in terms of the properties of the pure state in the enlarged space $ABCE$. In this enlarged space, the entropy of $B$ prior to the channel is precisely the entropy $S(BE)$ in the final state. Thus, the change $\delta I$ in the CMI after the measurement can be expressed as 
\begin{eqnarray}
\delta I & = & S(AB) + S(BC) - S(B) - S(E) \nonumber \\ && \quad - (S(A) + S(C) - S(AC)).
\end{eqnarray}
After some straightforward manipulation we can write
\begin{equation}\label{cmi0}
\delta I = 2 S(E) - \{ I(A:E) + I(B:E) + I(C:E) \}.
\end{equation}
We note two immediate consequences of Eq.~\eqref{cmi0}. First, from the non-negativity of mutual information and the definition of $S(E)$, we have

\begin{theorem}
\label{miebound0}
Suppose $ABC$ are initially in a pure state $\ket{\psi}$. A channel $\mathcal{N}$ is applied to subsystem $B$. Then the CMI gain $\delta I \equiv I(A:C|\mathcal{N}[B]) - I(A:C|B)$ obeys the bound $\delta I \leq 2S(A\mathcal{N}[B]C)$.
\end{theorem}
This upper bound is tight as it is saturated by the following example, visualized in Fig. \ref{fig:cmi_structure}: 
\vtwo{Let \(\ket{\psi}_{ABC}\) be a six-qubit absolutely maximally entangled (AME) state \cite{goyeneche2015absolutely}. An AME state has the maximal entanglement possible across any bipartitions. Let A contain one qubit, B contain three qubits, and C contain two qubits. Per definition of the AME state, \(S(AB)=2\), \(S(BC)=1\), \(S(B)=3\), and \(S(ABC)=0\), and hence \(I(A:C|B)=0\) initially. Now suppose the channel erases one qubit in \(B\) and resets it to a pure state \(\ket{0}_E\). After applying the channel, \(S(A\mathcal{N}[B])=3\), \(S(\mathcal{N}[B]C)=2\), \(S(\mathcal{N}[B])=2\), and \(S(A\mathcal{N}[B]C)=1\), and thus \(\delta I = 2\), saturating the upper bound. 

Note that while the AME property is fine-tuned and only exists in few-qubit systems \footnote{For example, in qubit systems AME states only exist for two, three, five, and six qubits \cite{huber2017absolutely}}, the above upper bound can also be almost saturated, up to an \(O(1)\) discrepancy and with high probability, by typical Haar-random states which are more generic and robust. To see that, take an \(N\)-qubit typical Haar-random state and take a tripartition with \(N_A\), \(N_B\), and \(N_C\) qubits in each partition such that \(N_B=N_A+N_C=N/2\). By ignoring the \(O(1)\) Page correction, \(S(AB)=N_C\), \(S(BC)=N_A\), \(S(B)=N_A+N_C\), and \(S(ABC)=0\), and hence \(I(A:C|B)=0\). By erasing \(M\) qubits in \(B\) satisfying \(M+\text{min}(N_A, N_C) < N/2\), \(S(A\mathcal{N}[B])=N_C+M\), \(S(\mathcal{N}[B]C)=N_A+M\), \(S(\mathcal{N}[B])=N_A+N_C-M\), and \(S(A\mathcal{B}C)=M\), and thus \(\delta I = 2 M\), up to the \(O(1)\) Page correction.}

The class of states for which $\delta I$ is maximal can be characterized as follows:

\begin{theorem}\label{erasure_structure}
  Given a pure state $\ket{\psi}_{ABC}$ and a pure quantum instrument $\mathcal{N}$ acting on $B$ that maximizes the CMI gain $\delta I = 2S(A\mathcal{N}[B]C)$. Let \(\sigma = \mathcal{N}[\ket{\psi}_{ABC}\bra{\psi}]\) be the state after applying the channel.
Then the eigenvectors $\ket{\phi_i}_{ABC}$ of the density matrix $\sigma$ of the post-decoherence system have the following property: 
\begin{equation}
    \mathrm{Tr}_B [\ket{\phi_i} _{ABC}\bra{\phi_i}]\mathrm{Tr}_B [\ket{\phi_j} _{ABC}\bra{\phi_j}]=0, \; \forall i \neq j
\end{equation}
In other words, different $\mathrm{Tr}_B [\ket{\phi_i} _{ABC}\bra{\phi_i}]$ are supported on orthogonal subspaces of $AC$.
\end{theorem}
\vthree{
\begin{proof}
By Eq.~\eqref{cmi0}, \({I(A:E)}={I(B:E)}={I(C:E)}=0\). Therefore, the environment is entangled with the composite system but decoupled from each individual subsystem $A, B, C$. \({I(B:E)}=0\) is equivalent to $\rho_{BE}=\rho_{B} \otimes \rho_{E}$ which allows us to invoke the decoupling theorem.
    \begin{lemma}
 \label{decoupling}   \emph{Decoupling theorem} \cite{preskill1998lecture}.--- Given a pure state $\ket{\psi}_{XYZ}$. Suppose the reduced density matrix on $XZ$ is separable: $\rho_{XZ} = \rho_{X} \otimes \rho_{Z}$, then there exist a local unitary $W_Y$ acting on $Y$ such that
    \begin{equation}\label{decoupling_decomposition}
        W_Y \ket{\psi}_{XYZ}\bra{\psi} W_Y^\dag = \ket{\psi_1}_{XY_1}\bra{\psi_1} \otimes \ket{\psi_2}_{Y_2Z}\bra{\psi_2}
    \end{equation}
    where $Y_1Y_2$ is a bipartition of $Y$ and $\ket{\psi_1}$, $\ket{\psi_2}$ are some pure states.
\end{lemma}
We now set $X = E$, $Y=AC$, and $Z=B$. Since $E$ and $B$ are decoupled, one can rotate $AC$ such that 
    \begin{equation}
        W_{AC} \ket{\psi}_{ABEC}\bra{\psi} W_{AC}^\dag = \ket{\psi_1}_{E\{AC\}_1}\bra{\psi_1} \otimes \ket{\psi_2}_{\{AC\}_2 B}\bra{\psi_2}
    \end{equation}
where $\{AC\}_1$, $\{AC\}_2$ denotes a bipartition of $AC$. To find the eigenvalues $\ket{\phi_i} _{ABC}$, we use the Schmidt decomposition of $\ket{\psi_1}$.
\begin{align}
    \ket{\psi_1}_{E\{AC\}_1} = \sum_i \lambda_{i} \ket{e_{i}}_E \otimes \ket{a_{i}}_{\{AC\}_1}
\end{align}
where $\lambda_{i}$ are the Schmidt coefficients and $\ket{e_{i}}_E $ and $\ket{a_{i}}_{\{AC\}_1}$ are some orthonormal basis on $E$ and $\{AC\}_1$, respectively. Under these basis, the post-decoherence state $\sigma=Tr_E[\ket{\psi}_{ABCE}\bra{\psi}]$ can be written in the following diagonal form:
\begin{equation}
    \sigma = \sum_i |\lambda_{i}|^2 \ket{a_{i}}_{\{AC\}_1}\bra{a_{i}} \otimes \ket{\psi_2}_{\{AC\}_2 B}\bra{\psi_2}
\end{equation}
In this form, it becomes apparent that the eigenvectors $\ket{\phi_i}_{ABC}$ are exactly $\ket{a_{i}}_{\{AC\}_1} \otimes \ket{\psi_2}_{\{AC\}_2 B}$. Crucially, after tracing out $B$, the $\ket{a_{i}}_{\{AC\}_1}$ part remains intact.
\begin{equation}
    \mathrm{Tr}_B [\ket{\phi_i} _{ABC}\bra{\phi_i}] = \ket{a_{i}}_{\{AC\}_1}\bra{a_{i}} \otimes Tr_B[\ket{\psi_2}_{\{AC\}_2 B}\bra{\psi_2}]
\end{equation}
Thus, different $\ket{\phi_i} _{ABC}$ are orthogonal even after tracing out $B$ because of the orthogonality of $\ket{a_{i}}_{\{AC\}_1}$.
\end{proof}
}
%
%
When Eq.~\eqref{cmi0} is maximized, the state on $ABCE$ can be interpreted as a quantum error correcting code: $E$ acts as a reference qubit for a code-space that is isometrically embedded in $ABC$, such that the information is protected against the erasure of any one of these subsystems.

\subsection{Measurement channels and pure states}\label{measurement_channel_pure}

We now consider the dynamics of CMI under the application of a pure quantum instrument. Again we restrict to initial states being pure and will lift this restriction in the next section. One can immediately see from~\eqref{cmi0} that the largest possible CMI gain is now $S(E)$ rather than $2S(E)$, since $I(B:E) \geq S(E)$ for any pure quantum instrument.

\begin{theorem}
   \label{miebound1} Suppose $ABC$ are initially in a pure state $\ket{\psi}$. A pure quantum instrument $\mathcal{N}$ is applied to subsystem $B$. Then the CMI gain $\delta I \equiv I(A:C|\mathcal{N}[B]) - I(A:C|B)$ obeys the bound $\delta I \leq S(A\mathcal{N}[B]C)$. Moreover, \(S(A\mathcal{N}[B]C)\) is the same as the Shannon entropy of the measurement outcome distribution.
\end{theorem}
\begin{proof}
 $\delta I \leq S(A\mathcal{N}[B]C)$ follows trivially from \(I(B:E) \ge S(E)\) which is true for pure quantum instruments because one can always copy the information in the classical register and send it to the environment. (\ref{eq:density_matrix}) gives the structure of the post-measurement state as an mixture of orthogonal post-selected pure states at the probability of the measurement outcome \(p_i\). 
 \vtwo{Specifically, the classical registers guarantee the orthogonality: \(\braket{i|j}_X = \delta_{ij}\). \(S(A\mathcal{N}[B]C)\) is then equal to:
 \begin{equation}
     S(A\mathcal{N}[B]C) = \sum_i p_i S(A\mathcal{N}[B]C|i) -\sum_i  p_i \log p_i
 \end{equation}
 Since the post-selected states are pure, \(S(A\mathcal{N}[B]C|i)=0, \forall i\), so}
 \(S(A\mathcal{N}[B]C)=-\sum_i  p_i \log p_i\) which coincides with the Shannon entropy of the measurement outcome distribution.
\end{proof}
Again we know the upper bound is tight because Bell teleportation saturates this bound. Visualized in Fig. \ref{fig:cmi_structure}(b), Bell teleportation consists of a Bell state \(1/\sqrt{2}(\ket{00}+\ket{11})\) shared between \(A\) and \(B\), and another Bell state shared between \(B\) and \(C\). \(B\) is then measured in the Bell basis. The four measurement outcomes are equally probable, so \(S(A\mathcal{N}[B]C)=2\). On the other hand, the resulting state is always maximally entangled between \(A\) and \(C\), so \(I(A:C|i)=2, \forall i\), thus the upper bound is saturated. As an immediate corollary, if the pure quantum instrument measures \(N\) qubits in \(B\), then \(\delta I \le S(A\mathcal{N}[B]C) \le N\). In other words, measuring \(N\) qubits can give rise to at most \(N\) bits of CMI growth.

%

Unlike generic channels, the environment of the measurement channel is now classically correlated with the system, so the decoupling theorem is no longer available for characterizing states with maximal CMI gain. Instead, we will have to follow a more involved proof strategy based on accessible information. The result, however, closely parallels  We state the result here, and defer the proof to Appendix \ref{proof}.

\begin{figure}
\scalebox{0.4}{\includegraphics{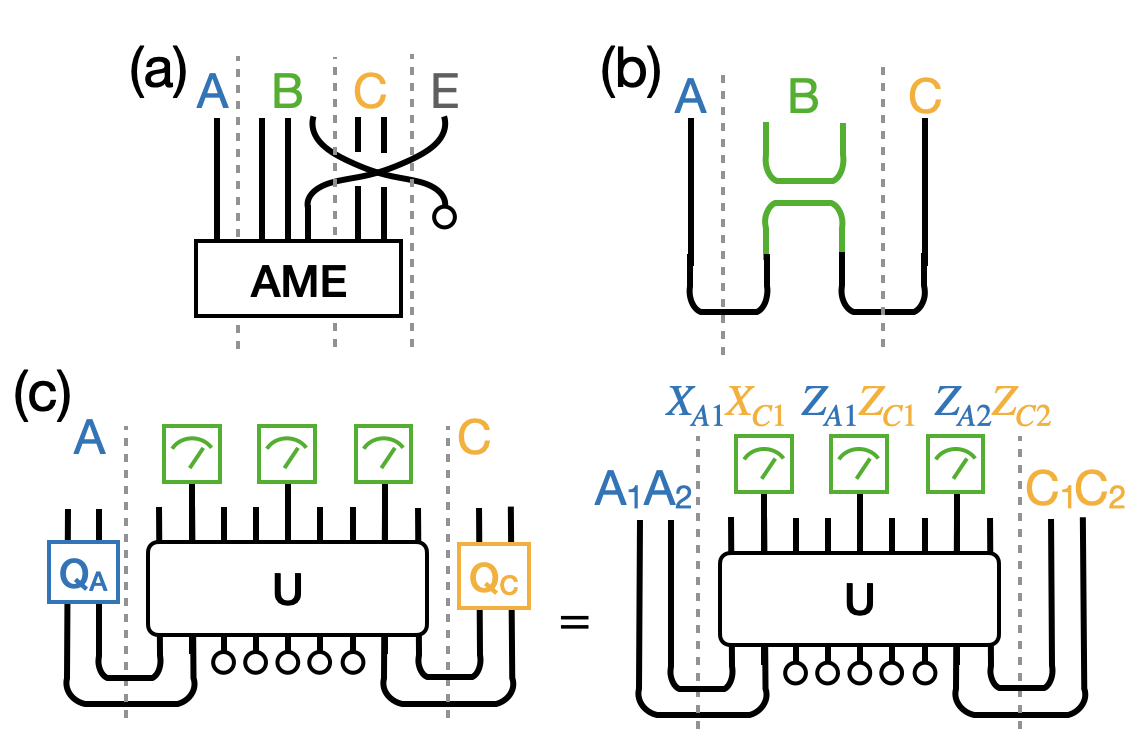}}
\caption{\label{fig:cmi_structure}(a) Example that saturates the CMI upper bound for generic channel. (b) Example that saturates the CMI upper bound for measurement channel . (c) Given a state saturating the CMI upper bound for measurement channel, if \(\rho_{AC}\) is maximally mixed, then the measurement protocol is local-unitary equivalent to measuring Bell stabilizers.}
\end{figure}

\begin{theorem}
 \label{thm:structural_theorem}
    Given a pure state $\ket{\psi}_{ABC}$ and a pure quantum instrument $\mathcal{N}$ acting on $B$ that maximizes the CMI gain $\delta I = S(A\mathcal{N}[B]C) = -\sum_i p_i \log p_i$. Let $\rho_{AC|i}$ be the reduced density matrix on $AC$ corresponding to measurement outcome $i$. Then, 
    \begin{equation}
        \rho_{AC|i}\rho_{AC|j} = 0, \, \forall i \neq j
    \end{equation}
In other words, different \(\rho_{AC|i}\) are supported on orthogonal subspaces of \(AC\).
    

    In addition, if  \(\rho_{AC} \) is maximally mixed and suppose that the pure quantum instrument measures \(N\) qubits in \(B\) giving equally probable outcomes, then there exist local unitaries \(Q_A, Q_C\) acting on A and C such that after the application the each qubit measures some Bell stabilizer in \(AC\). 
\end{theorem}

Fig. \ref{fig:cmi_structure}(c) visualizes the second part of the Observation \ref{thm:structural_theorem}. Under the assumption that \(\rho_{AC}\propto\) identity matrix, one can perform local rotations such that \(B_1,B_2,...B_N\) are measuring some Bell stabilizers such as \(X_{A1}X_{C1}\), \(Z_{A2}Z_{C2}\), as shown in the figure. This observation complete characterizes system where MIE is maximal. In general, the post-measurement density matrices on \(AC\) are orthogonal and thus maximally distinguishable. Under some additional constrains, all systems achieving maximal MIE growth are equivalent to Bell teleportation.

\subsection{Case of mixed states}\label{mixed_state}

We now discuss how the results of the previous sections extend to mixed states. In this case and for both generic channels and the measurement channels, the CMI upper-bound still holds true, but the structural theorem no longer holds.

\vtwo{
\begin{theorem}
The upper bound \(\delta I \le 2S(E)\) for generic channels and \(\delta I \le S(E)\) for pure quantum instruments still holds true even if the initial state \(\rho_{ABC}\) is mixed.
\end{theorem}
\begin{proof}
    We purify the state $\rho_{ABC}$ by adding a reference system $R$ such that $\ket{\psi}_{ABCR}$ is a purification of $\rho_{ABC}$. Dilating the channel to explicitly include an environment $E$, as before, we can write the state following the application of the channel $\mathcal{N}$ as $\ket{\Psi}_{ABCRE}$. With reference to $\ket{\Psi}_{ABCRE}$, the CMI change is $\delta I = I(A:C|B) - I(A:C|BE)$. We now add and subtract $I(A:C|BER)$ to rearrange this expression as:
    \begin{eqnarray}\label{cmimix}
    \delta I & = & [I(A:C|B) - I(A:C|BER)] \nonumber \\ & & \quad - [ I(A:C|BE) - I(A:C|BER) ].
    \end{eqnarray}
Each of the expressions in square brackets refers to the change in CMI when tracing out part of the conditioned subsystem in a globally pure state. This observation allows us to use Eq.~\eqref{cmi0} to express~\eqref{cmimix} as follows:
\begin{eqnarray}
\delta I & = & 2 (S(ER) - S(R)) + (I(BE:R) - I(B:ER)) \\ & & + I(A:R) - I(A:ER) + I(C:R) - I(C:ER).\nonumber 
\end{eqnarray}
The second line is upper-bounded by zero by the quantum data processing inequality. Plugging in the definition of mutual information, one finds that
\begin{equation}
\delta I \leq S(ER) - S(R) + S(EB) - S(B) \leq 2 S(E),
\end{equation}
by subadditivity of entanglement entropy. In the case of measurement channels, \(I(E:B) \ge S_E\) is equivalent to \(S_{BE}-S_{B} \le 0\), so \(\delta I \le S_E\).
    
\end{proof}
}
Thus, the upper bounds on $\delta I$ remain valid even if the initial state is mixed. However, maximal CMI gain need not imply anything about the structure of the initial state.

\begin{observation}
For mixed initial states satisfying the upper bound on $\delta I$, it need not be the case that the eigenvectors of the post-channel state are distinguishable on $AC$. In other words, Theorem \ref{erasure_structure} and \ref{thm:structural_theorem} does not hold for mixed initial states.
\end{observation}
To substantiate this observation we provide two counterexamples:


\vtwo{
\textbf{Erasure Channel} This example is visualized in Fig. \ref{fig:mixed_counterexample}. Consider a six-qubit AME state partitioned into \(ABCR\) such that $A$ contains one qubit, $B$ contains three qubits, $C$ contains one qubit, and $R$ contains one qubit. We trace out \(R\) so that the resulting state on $ABC$ is mixed. One can easily verify that \(S(AB)=2\), \(S(BC)=2\), \(S(B)=3\), \(S(ABC)=1\), so \(I(A:C|B)=0\) initially. We now erase one qubit in $B$. After that, \(S(A\mathcal{N}[B])=3\), \(S(\mathcal{N}[B]C)=3\), \(S(\mathcal{N}[B])=2\), \(S(A\mathcal{N}[B]C)=2\), so \(I(A:C|\mathcal{N}[B])=2\), saturating the CMI upper bound. 

Nevertheless, we now show 
that 
the orthogonality relation no longer holds. Let $B_1$ be the qubit in $B$ to be erased and let $B_2B_3$ be the other two qubits. Following \cite{helwig2012absolute}, the six-qubit AME state can be written as
\begin{equation}
    \ket{\psi}_{ABCR} = \frac{1}{\sqrt{2}}(\ket{0}_{B_1}\ket{0_L}_{AB_2B_3CR}+\ket{1}_{B_1}\ket{1_L}_{AB_2B_3CR})
\end{equation}
Where $\ket{0_L}_{AB_2B_3CR}$ and $\ket{1_L}_{AB_2B_3CR}$ are the logical 0 and 1 states of the five-qubit error correcting code. Crucially, $\ket{0_L}_{AB_2B_3CR}$ and $\ket{1_L}_{AB_2B_3CR}$ are also five-qubit AME states. One can see, either from this AME property or because the code distance of the 5-qubit code is 3, that the two states $\ket{0_L}_{AB_2B_3CR}$ and $\ket{1_L}_{AB_2B_3CR}$ give \emph{identical} reduced density matrices on $AC$---namely, the maximally mixed state. Thus the orthogonality relation clearly fails.

}


\begin{figure}
\scalebox{0.4}{\includegraphics{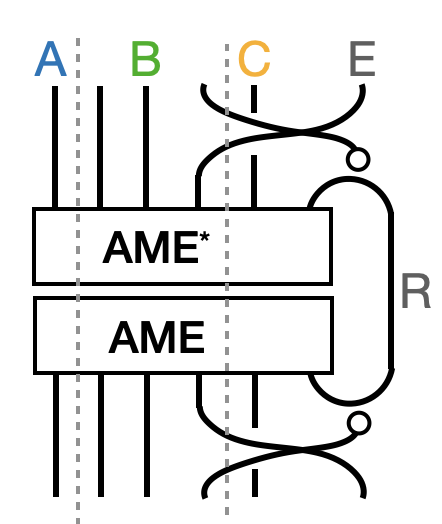}}
\caption{\label{fig:mixed_counterexample}The counterexample where a mixed state saturates the CMI upper bound under the erasure channel but does not have the orthogonality property}
\end{figure}

\textbf{Measurement Channel} consider the following mixed state with \(I(A:C|B)=0\).
\begin{equation}
\begin{split}
    \rho_{ABC} &= \frac{1}{2}\ket{0}\bra{0}_A \otimes \ket{0}\bra{0}_B \otimes \ket{0}\bra{0}_C\\
               &+ \frac{1}{2}\ket{1}\bra{1}_A \otimes \ket{1}\bra{1}_B \otimes \ket{1}\bra{1}_C
\end{split}
\end{equation}
This is the same mixed state as discussed in section \ref{quantum_cmi}. Now the measurement channel measures B in the X basis, so that after the measurement channel the state becomes
\begin{equation}
\begin{split}
    \rho_{A\mathcal{N}[B]C} &= \frac{1}{2}(\ket{+x}\bra{+x}_{B}+\ket{-x}\bra{-x}_{B})\\
    &\otimes(\ket{0}\bra{0}_A \otimes \ket{0}\bra{0}_C+ \frac{1}{2}\ket{1}\bra{1}_A \otimes \ket{1}\bra{1}_C)
\end{split}
\end{equation}
Where we ignore the classical registers since the post-measurement states in \(B\) is already orthogonal. It is easy to see that \(I(A:C|\mathcal{N}[B])=1\), saturating the upper bound. Nevertheless, no matter what is the measurement outcome, \(\rho_{AC|i}\) is always \(\ket{0}\bra{0}_A \otimes \ket{0}\bra{0}_C+ \frac{1}{2}\ket{1}\bra{1}_A \otimes \ket{1}\bra{1}_C\), so the structural theorem no longer holds as well.

\section{Efficient teleporters}\label{hayden}

In this section we discuss efficient teleporters: many-body examples of states with maximal CMI growth under decoherence. Specifically, we consider the case when \(\delta I=2S(A\mathcal{N}[B]C))\) for generic channels and \(\delta I=S(A\mathcal{N}[B]C)\) for measurement channels, and moreover, demand that \(S(A\mathcal{N}[B]C)=S(E)=N_E\). In other words, not only do we want the CMI to saturate the upper bound as controlled by the entropy injected, we require the entropy to be maximal as allowed by the channel. 

We will consider the following setup: an initial pure state is prepared on $ABC$, which respectively contain $N_A, N_B, N_C$ qubits. One of the qubits in $B$ is randomly chosen and subjected to a channel. Under what condition can the CMI growth be maximal? As we will see, states that saturate the general bound (Theorem \ref{miebound0}) are much more restricted than those that saturate the bound for the measurement channel (Theorem \ref{miebound1}). 

\subsection{Efficient teleporters under erasure}
We can construct a class of efficient teleporters under erasure (up to an \(O(1)\) discrepancy) using the following procedure: pair each qubit in $AC$ with a qubit in $B$ and prepare each pair in a maximally entangled state, then apply a random unitary rotation on $AC$. The resulting state is shown in Fig. \ref{fig:efficient_eraser}(a). The six-qubit AME state example in \ref{generic_channel_pure} can also be written in this form. In this case, any one of the qubits in $B$ can be erased giving the maximum $\delta I$, in the limit where all subsystems have $N \to \infty$. Obviously, \(N_B=N_A+N_C\) in order to pair each qubit in $AC$ with a qubit in $B$. In other words, the relative size between \(AC\) and \(B\) has to be fixed to produce efficient teleporters with this strategy. 

One might ask if it is possible to do better: e.g., if any one of the qubits in $N_B$ can be measured giving maximum growth in $\delta I$, even if the size of \(B\) is bigger than \(AC\), for example in the setup shown in Fig. \ref{fig:efficient_eraser}(b). It turns out this is impossible.

\begin{theorem}\label{efficient_teleporter_erasure}
Given a state \(\ket{\psi}_{ABC}\), at most \(N_A+N_C\) qubits in \(B\) can have the property that erasing any one of them gives rise to maximal CMI growth of two.
\end{theorem}
\vtwo{
\begin{proof}
\vthree{For concreteness let us first consider the case $k = 2$. Suppose $B = B_0 Q_1 Q_2$, where either $Q_1$ or $Q_2$ can be erased (i.e., transferred from $B$ to $E$) giving maximal $\delta I$. Recall that to saturate the CMI upper bound, we need \(I(E:B)=0\). Therefore, we have $I(B_0 Q_1: Q_2) = 0$ and $I(B_0 Q_2:Q_1) = 0$. This implies the following factorizations of the density matrix:
\begin{equation}
    \rho_{B} = \rho_{B_0 Q_1} \otimes \rho_{Q_2} = \rho_{B_0 Q_2} \otimes \rho_{Q_1}
\end{equation}
by data-processing inequality, $I(B_0 Q_1: Q_2) = 0$ implies that $I(B_0: Q_2) = 0$, so that $\rho_{B_0 Q_2}=\rho_{B_0} \otimes \rho_{Q_2}$. Thus,
\begin{equation}
    \rho_{B} = \rho_{B_0} \otimes \rho_{Q_1} \otimes \rho_{Q_2}
\end{equation}
The above argument can be generalized to to erasing $k$ qubits $Q_1 Q_2 \ldots Q_k$: if each $k$ can be erased with maximal $\delta I$, then $\rho_{B} = \rho_{B_0} \otimes \rho_{Q_1} \otimes \ldots \otimes \rho_{Q_k}$. Because of the factorization, we invoke the decoupling theorem (Lemma \ref{decoupling}). It is obvious that Eq. (\ref{decoupling_decomposition}) implies the following relation about the coherent information
    \begin{equation}
        I(X \rangle Y) = I(X \rangle YZ) = S(X)
    \end{equation}
    Where $I(X \rangle Y) \equiv S(Y) - S(XY)$ denotes the coherent information. The above equation is essentially the decoupling theorem stated from a communication perspective: as long as $X$ and $Z$ are decoupled, quantum information purified by $X$ can be perfectly recovered in $Y$. Applying the above relation and setting $X=Q_1Q_2 \ldots Q_k$, $Y=B_0$, and $Z=AC$, we have 
\begin{equation}
    I(Q_1 \ldots Q_k \rangle AC) = S(Q_1 \ldots Q_k),
\end{equation}
Since the coherent information is upper bounded by $N_A + N_C$, it follows that \(S(Q_1 \ldots Q_k) \le N_A+N_C\). Given that $\rho_{B} = \rho_{B_0} \rho_{Q_1} \ldots \rho_{Q_k}$, and each $\rho_{Q_k}$ has the maximal possible entropy of one bit, $S(Q_1 \ldots Q_k) = \sum_{i = 1}^k S(Q_i) = k$ (in bits). It follows that \(k \le N_A+N_C\).
}

\end{proof}
}

We note what while at most \(N_A+N_C\) qubits can give rise to maximal CMI growth upon erasure, it does not imply that erasing all of them together increases CMI maximally as well. This can already seen in the state depicted in Fig. \ref{fig:efficient_eraser}(a). Erasing any one qubit in \(B\) gives maximal CMI growth, but if all qubits in \(B\) are erased, then CMI decreases to 0. In fact, it is easy to see that at most \(\min(N_A,N_C)\) qubits can be erased together to give maximal CMI growth since CMI is upper bounded by \(2\min(N_A,N_C)\).
\begin{figure}
\scalebox{0.35}{\includegraphics{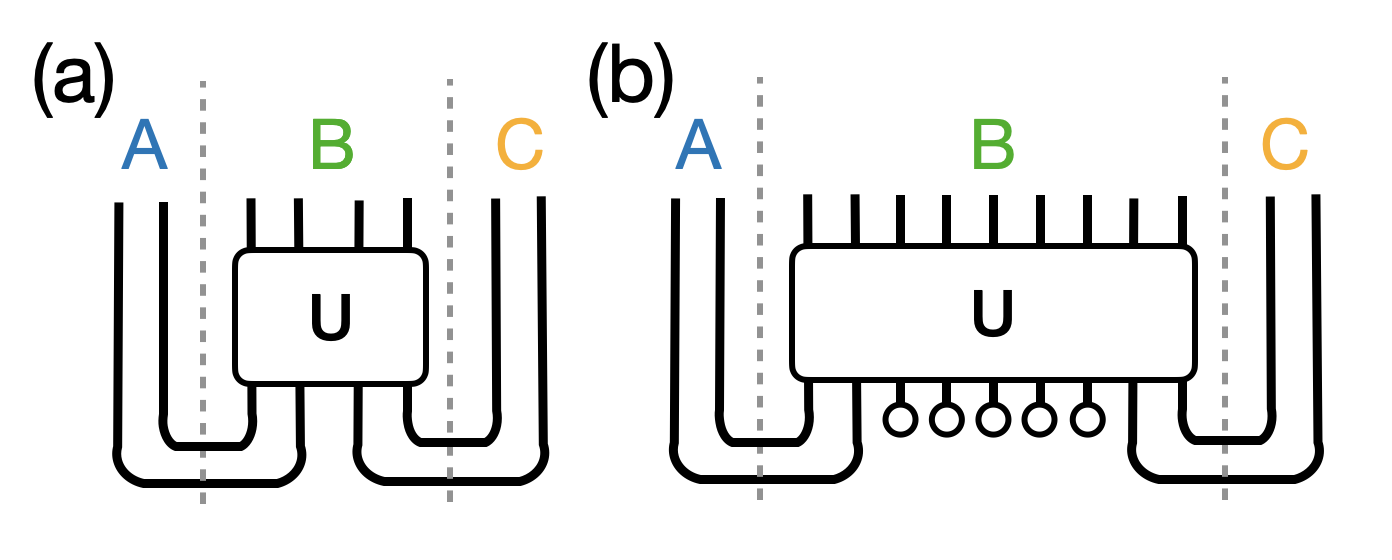}}
\caption{\label{fig:efficient_eraser}(a) An efficient teleporter where qubits in B are randomly erased. (b) When the size of B is bigger than AC, not all qubits can produce maximal CMI growth upon erasure.}
\end{figure}



\subsection{Efficient teleporters under measurement}

We now consider the case of efficient teleporters where qubits in B are randomly subject to measurement channels. In this case, a class of efficeint teleporters can be constructed by preparing multiple copies of Bell teleportation protocols. However, Bell teleportation also has the structure that \(N_B=N_A+N_C\) so we again ask if efficient teleportation is possible when the size of \(B\) is bigger than  the size of \(AC\).

Contrast to the case of erasure channel, we uncover a class of states where measuring \textit{any} subset of qubits in \(B\) produces maximal CMI growth (with probability exponentially close to 1), no matter how big the size of \(B\) is. This is inspired by the Hayden-Preskill Thought Experiment \cite{10.1088/1126-6708/2007/09/120} and its decoder due to Yoshida and Kitaev \cite{10.48550/arxiv.1710.03363}. In our setup, as depicted in Fig.\ref{fig:hayden_preskill}(a), systems A and B, as well as B and C, initially share maximally entangled states. Additionally, B possesses a larger maximally entangled state. Half of B's states undergo evolution under a scrambling, Clifford random unitary \(U\), while the other half evolve under its time-reversed counterpart \(U^*\). Subsequently, states in B are paired up and measured in the Bell basis. We randomly select pairs in B to measure and then compute \(\delta I\), followed by averaging over these measurement choices. The result is shown in Fig.\ref{fig:hayden_preskill}(b). Regardless of B's size, the averaged \(\delta I\) consistently reaches the upper bound, indicating that the specific qubits measured and B's size are irrelevant to efficient teleportation.

 Efficient teleportation of the Clifford Hayden-Preskill protocol has been analyzed in \cite{10.48550/arxiv.2106.15628} and we review it in Appendix \ref{clifford_hayden_preskill}. In alignment with our observations on maximal CMI growth under measurement channels, we offer a novel perspective on this protocol's efficiency irrespective of the qubits measured. Since \(\rho_{AC} \propto\) the identity matrix, Theorem \ref{thm:structural_theorem} implies that the Clifford Hayden-Preskill protocol is equivalent to Bell teleportation. This equivalence is evident when considering the final Bell measurements, as measuring the \(XX\) and \(ZZ\) stabilizers. In Fig. \ref{fig:hayden_preskill}(c), the left panel shows the stabilizer \(Z_{1L}Z_{1R}\) supported on qubits \(B_{1L},B_{1R}\). Commuting through the Clifford unitary \(U\), this stabilizer transforms to \(Z_{1L}(t)Z_{1R}(-t)\), where \(Z_{1L}(t) = U^\dagger Z_{1L} U\) and \(Z_{1R}(-t) = U^T Z_{1R} U^*=(U^\dag Z_{1R} U)^T\). Importantly, the Clifford unitary maps a Pauli operator to another Pauli operator, leading to \(Z_{1L}(t)\) and \(Z_{1R}(-t)\) factorizing into identical strings of Pauli operators, as visualized in the right panel of Fig. \ref{fig:hayden_preskill}(c). The large maximally entangled state in B are the eigenstates of the green Pauli operators, while the remaining blue and yellow Pauli operators become some Bell stabilizers on A and C. This analysis extends to \(XX\) stabilizers and other qubits as well, confirming that the Clifford Hayden-Preskill protocol adheres to Theorem \ref{thm:structural_theorem}.

The irrelevance of which qubits are measured in B emerges from the scrambling nature of \(U\). Generally, the transformed stabilizers \(Z_{1L}(t)Z_{1R}(-t)\) correspond to random products of Bell stabilizers in systems A and C. For example, \(Z_{1L}(t)Z_{1R}(-t)\) corresponds to \(X_{A1}X_{C1}Y_{A2}Y_{C2}Z_{A3}Z_{C3}...\), \(Z_{2L}(t)Z_{2R}(-t)\) corresponds to \(Z_{A1}Z_{C1}Y_{A3}Y_{C3}Z_{A7}Z_{C7}...\), etc.  When N pairs of qubits in B are randomly selected and their corresponding stabilizers measured, with high probability these stabilizers are independent as long as N is smaller than the sizes of A and C. Measuring these stabilizers results in a mixed stabilizer state in systems A and C, generated by these independent Bell stabilizers, thereby teleporting N bits of information. This structure underlies the reason why specific qubits chosen for measurement in B are irrelevant.

Crucially, the efficient teleportation of this protocol requires \(U\) to be a Clifford unitary. In contrast, for a generic Haar-random \(U\), CMI growth becomes significanty sub-maximal. We analyse the CMI growth of the Haar Hayden Preskill protocol in Appendix \ref{haar_hayden_preskill}. Here, we provide an intuitive reason on why the Haar Hayden Preskill protocol cannot saturate the CMI upper bound. Returning to Fig.\ref{fig:hayden_preskill}(c), if \(U\) is Haar-random, then \(Z_{1L}(t)Z_{1R}(-t)\) will not factorize into Pauli strings but have operator entanglement. Measuring it will then collapse the state into a subspace where the component on \(AC\) is entangled with the remaining degrees of freedom in \(B\), and thus become indistinguishable.

\begin{figure}
\scalebox{0.35}{\includegraphics{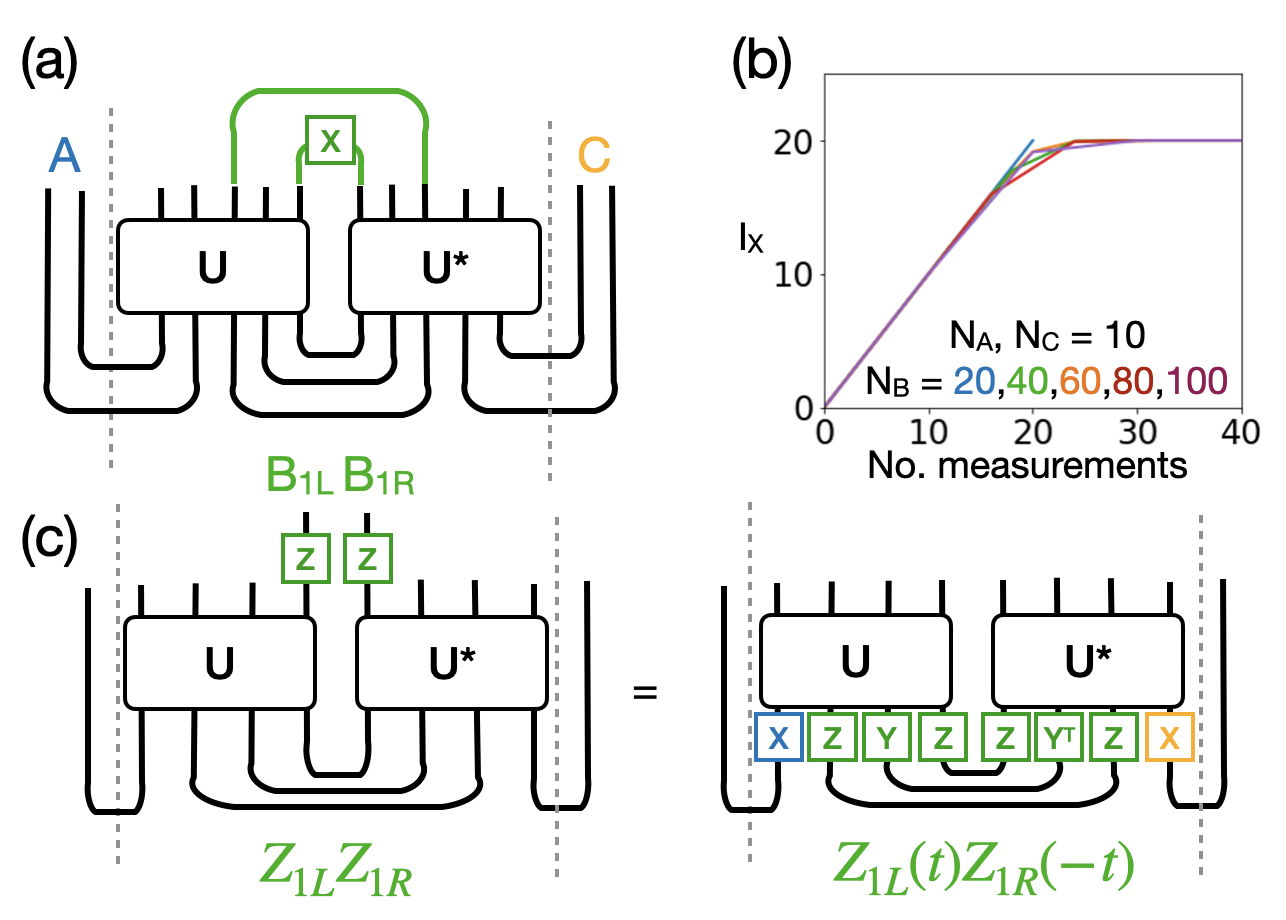}}
\caption{\label{fig:hayden_preskill}(a) Schematics of the Hayden-Preskill Protocol.  (b) MIE as a function of the number of Bell measurements. Each bell measurement is counted twice because it generates two bits of information. Irrespective of the size of B, the growth of MIE is always maximal until it reaches the maximal value. (c) Performing Bell measurements in \(B\) is equivalent to measuring some Bell stabilizers in \(AC\).}
\end{figure}

\section{CMI under Random Unitary Circuits}\label{ruc}

Finally, we turn to the evolution of CMI in more generic quantum states sampled from random unitary circuits, where the depth can be tuned to controlled the degree of non-local interactions. We will first discuss the case of measurements, as it is more intuitive, and makes contact with previous literature on the teleportation transition. Then we will discuss how replacing measurements with erasures leads to a qualitative change in the nature of this transition. 

Throughout this discussion we will consider a tripartite system $ABC$ with $N_A, N_B, N_C$ qubits in the respective subsystems. We will take $N_A = N_C$ for simplicity. We will assume that all three subsystems are extensive in the system size $N$; our primary concern is with the thermodynamic limit $N \to \infty$.

\subsection{Case of measurements}

To orient this discussion, we first consider the growth of CMI in Haar-random states. By Proposition \ref{mie_cmi}, the CMI after measuring $k$ qubits in $B$ is simply the Born-averaged mutual information $I(A:C|i)$ in the post-measurement state. The post-measurement state is a Haar-random state on $N - k$ qubits. In this state, the mutual information $I(A:C) = \max(0, N_A + N_C - (N_B - k))$ (up to Page corrections)~\cite{PhysRevLett.71.1291}. 

\vthree{A random circuit saturates the bipartite entanglement at depth $O(L)$  \cite{10.1007/s00220-016-2706-8,10.48550/arXiv.1905.12053}, so they behave similarly to Haar-random states in terms of bipartite entanglement.}
%
This establishes that the CMI in these states begins to grow to a nonzero value when a finite fraction of the spins in $B$ have been measured. It is interesting to ask how this phase transition connects to the ``teleportation'' transition that takes place at depth $O(\log L)$ in one dimension and \(O(1)\) in higher dimensions \cite{10.48550/arXiv.2110.06963}. At the teleportation transition, the development of CMI between $A$ and $C$ requires \emph{all} the spins in $B$ to be measured.

To support this analysis, we explore the dynamics of CMI in random unitary Clifford circuits consisting of two-qubit gates laid out in a brickwork pattern, as depicted in Fig. \ref{fig:inefficient_teleporter}(a). While we use a 1+1D circuit for illustration, our findings are not dependent on the dimensionality of the circuit. We plot \(\delta I\) as a function of the circuit depth and the number of measurements in Fig.\ref{fig:inefficient_teleporter}(b). We see both regimes of behavior---the depth-independent Haar-random behavior at depth $O(L)$ (green line), and the onset of teleportation at depth $O(\log L)$ (blue star), and a transition at intermediate depths that interpolates linearly between these (red line). Additional numerics and scaling analysis are presented in Appendix \ref{scaling} to support our claim on the scaling behavior.

This interpolation can be readily understood within the statistical mechanics treatment of random circuits. We explain the intuition here and defer additional numerics to Appendix \ref{stat_mech}. In the statistical mechanics framework, the state after a circuit of depth $t$ can be regarded as a system of fictitious spins with ferromagnetic interactions of range $\sim t$. To compute the entanglement entropy of part of a pure state, one imposes distinct biasing fields to the region and its complement, and computes the free energy of the system subject to these boundary conditions. The qubits that have been measured have no bias field acting on them. In this statistical mechanics model, the teleportation transition can be understood as follows: region $B$ is fully measured so it is a region with no bias field. To compute the MIE, which is just $S(A)$ in the post-measurement state, one imposes opposite boundary conditions on $A$ and $C$, forcing a domain wall between them. When the circuit is shallow enough, region $B$ is in the paramagnetic phase and the domain wall costs no energy in the limit $N_B \to \infty$; beyond a critical depth (in $d \geq 2)$ or at logarithmic depth (in $d = 1$) region $B$ is ferromagnetically correlated, so the free energy cost of the domain wall remains finite as $N_B \to \infty$, and consequently the MIE is nonzero. 

When a fraction of the spins in $B$ remain unmeasured, they act as a net biasing field in $B$. The CMI is now a mutual information rather than an entanglement, so it is a difference of domain wall energies (Appendix \ref{stat_mech}). At finite depth, when $B$ is large enough, the measured spins in $B$ are polarized by the unmeasured spins, so there is no correlation between $A$ and $C$ when $B$ is large enough---as one might expect, there is no distinction between the paramagnet and the ferromagnet in a nonzero field. A teleportation transition does occur at depth $\sim N$, when the energy cost $\sim N$ of creating domain walls at the $AB$ and $BC$ interfaces exceeds the energy gain from polarizing the unmeasured spins in $B$ along the local field. This transition is discussed in more detail in App. \ref{stat_mech}.


\subsection{Case of erasures}

We now turn our attention to applying erasure channels to states sampled from random unitary circuits. Again we begin with the case of a Haar-random state, corresponding to ciruits of  \(O(L)\) depth. 
%
The CMI after decoherence is that of a random state distributed between $ABC$ and the environment $E$, which contains $k$ qubits that were originally in $B$. It is straightforward to see that the CMI is nonzero when $|N_B - 2k| \leq N_A + N_C$. This behavior is richer than that in the case of measurements: too many erasures can cause the CMI to decrease after initially rising. 

We now turn to the situation at finite depth. From the definition of CMI, we note that $B$ and $E$ are always in complementary subsystems, so in the statistical mechanics mapping they always have opposite fields acting on them. Since the erasures are chosen randomly, the CMI calculation amounts to evaluating the free energy cost of a domain wall in a ferromagnet in the presence of random local fields. When $B$ and $E$ are different sizes, there is a net field, and (as discussed in the previous section) no teleportation transition occurs until depths linear in $N$. But when $B$ and $E$ are the same size, there is no net field, and a finite-depth transition from a random-field paramagnet to a ferromagnet can happen in $d \geq 3$ (but not in $d = 1, 2$ according to the Imry-Ma argument~\cite{PhysRevLett.35.1399}). This transition corresponds to the generation of long-range CMI, and is a generalization of the teleportation transition in the case of measurements.

Again we support our analysis with Clifford numeric shown in Fig. \ref{fig:inefficient_teleporter}(c). The circuit is arranged in the brickwork architecture with depth being tuned. Instead of measurements in the end, we randomly erase qubits in B and observe the behavior of CMI. We plot \(\delta I\) as a function of the circuit depth and the number of erasures in Fig.\ref{fig:inefficient_teleporter}(b). Again we observe the teleportation behavior consistent with the Haar behavior at \(O(L)\) circuit depth (green line). The predicted rise and fall in CMI appears in the window where $|N_B - 2k| \leq N_A + N_C$.  A teleportation transition happens at shallow depth (blue star) and interpolates to the Page behavior with increasing circuit depth (red line). 

In $d = 1, 2$ the correlation length remains finite at all finite depth. Indeed, following the Imry-Ma argument, we predict that in one dimension the correlation length over which information is teleported grows as $\xi(t) \sim t^2$ \footnote{we note that this phenomena was firstly discovered in [Lee, Jiang] in a slightly different setup. We thank the authors for useful discussions}, and correspondingly teleportation transition only happens at depth $L^{1/2}$, while it takes place at logarithmic depth in two dimensions and finite depth in three or more dimensions. This scaling is confirmed at Appendix \ref{scaling}.




\begin{figure}
\scalebox{0.4}{\includegraphics{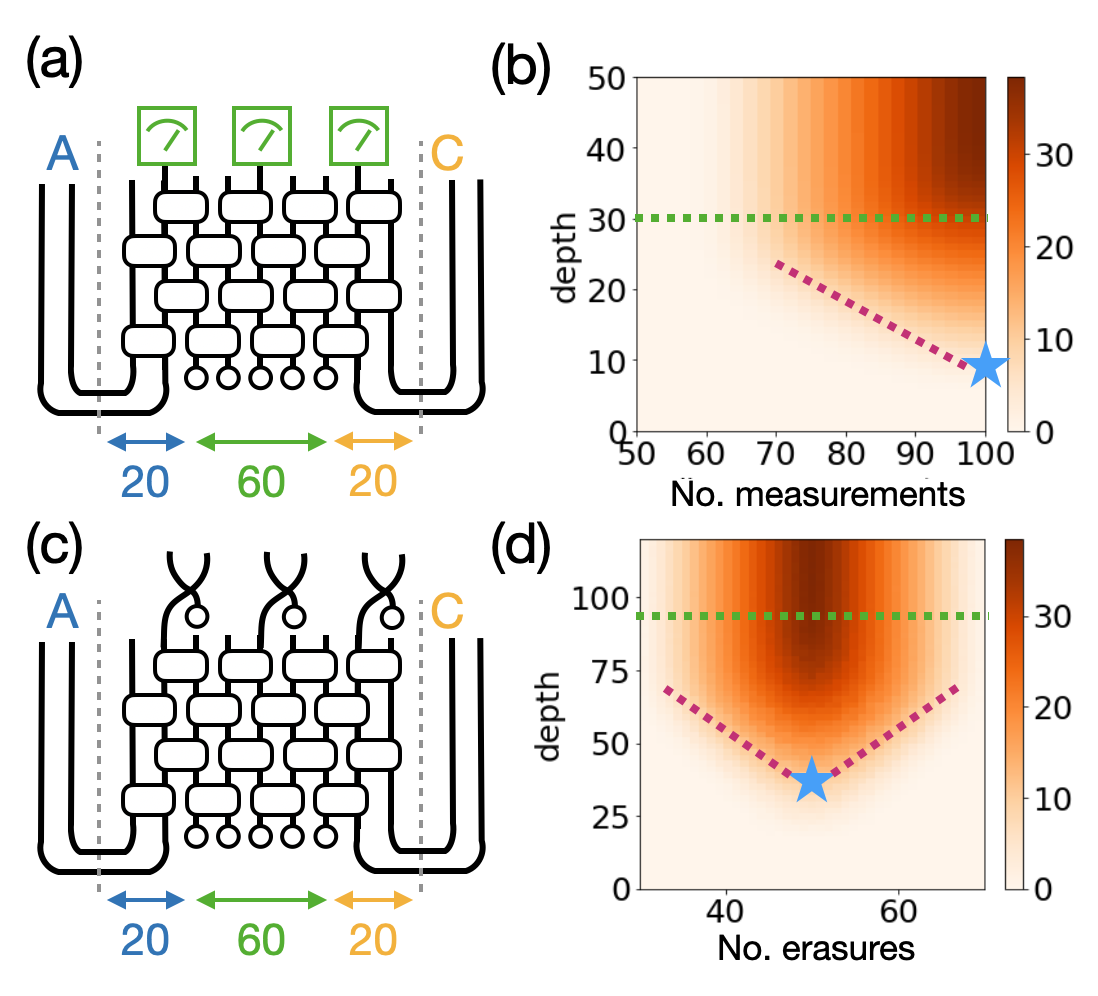}}
\caption{\label{fig:inefficient_teleporter}(a) States prepared with random unitary circuits are subject to measurement channels on randomly selected qubits. For the data shown here, \(N_A=N_C=20, N_B=100\).  (b) \(\delta I\) as a function of the number of measurements and circuit depth. (green line) Page behavior at \(O(L)\) circuit depth. (blue star) onset of teleportation transition. (red line) Interpolation from teleportation transition to Page behavior. (c) States prepared with random unitary circuits are subject to erasure channels on randomly selected qubits. For the data shown here, \(N_A=N_C=20, N_B=100\).  (b) \(\delta I\) as a function of the number of erasures and circuit depth. (green line) Page behavior at \(O(L)\) circuit depth. (blue star) onset of teleportation transition. (red line) Interpolation from teleportation transition to Page behavior.}
\end{figure}

\section{Conclusion}

To summarize, we have discussed the nonlocal dynamics of quantum conditional mutual information under decoherence in the conditioning system. 
%
%
We derived an upper bound for the rate at which decoherence can induce long-range CMI, both for the case of measurements---in which case the CMI can be related to the measurement-induced entanglement---and for generic channels. 
We also provided a structural characterization of systems in which nonlocal CMI is efficiently generated, showing that the nonlocal CMI generation is inherently similar to teleportation. 
In addition, we showed that measuring particular subsets of B is not necessary for achieving efficient teleportation. We give an explicit example of Clifford Hayden-Preskill protocol where each measurement corresponds to measuring random combination of Bell stabilizers. Measuring N qubits, the corresponding Bell stabilizers are independent with high probability, thus teleporting N bits of information. 
Lastly, we discussed the dynamics of CMI under random unitary circuits, connecting the teleportation transition at finite depth with the Page transition that occurs after full scrambling. This discussion also led to the intriguing observation that the finite-depth teleportation transition is strongly modified when measurements are replaced by erasure channels. Our results suggest that it might be fruitful to explore the growth of CMI under decoherence in a broader family of physically relevant states, such as topological phases, cluster states, and quantum error correcting codes. 

\emph{Note added}.---While we were completing this work we became aware of the work of Ref.~\footnote{S. Lee and L. Jiang, in preparation.} exploring the effect of depolarization on CMI in random unitary circuits, which will appear soon. Our results are complementary and agree where they overlap.

\begin{acknowledgments}
We are grateful to Andrew Potter, Shengqi Sang, and Romain Vasseur for helpful discussions. This work was funded by NSF DMR-2103938.
\end{acknowledgments}

{

}

\begin{widetext}
\appendix

\section{Comparing Different Teleportation Metrics}\label{teleport_metric}
In this section, we discuss and compared different teleportation metrics commonly used in the literature. Teleportation in many-body states is often quantified by the concept of measurement-induced entanglement (MIE) \cite{10.22331/q-2023-02-02-910} and localizable entanglement \cite{10.1103/physreva.71.042306}. MIE is defined for a pure state \(\ket{\psi}_{ABC}\) where all qubits in B are measured in the computational basis, without loss of generality. for each measurement outcome \(i\) with probability \(p_i\), there is a resulting pure state \(\ket{\psi}_{AC|i}\). MIE is defined as:
\begin{equation}
    MIE=\sum_i p_i S(A|i)
\end{equation}
Where \(S(A|i)\) is the entanglement entropy of \(\ket{\psi}_{AC|i}\). Localizable entanglement is similar to MIE, except one optimizes the measurement basis of each qubit in B to maximize MIE. 

To compare with our CMI definition, We note that \(\delta I = 2 \; MIE\) in the scenario where all qubits in B are measured, but \(\delta I\) is also defined when B is partially measured or subject to a different decoherent channel. In addition, MIE is not well-defined when the initial state is mixed. 
%
\vthree{Therefore, CMI as a non-local correlation measure generalizes MIE to the case of arbitrary channels as well as the case of mixed states.}

\section{Proof of Proposition \ref{thm:structural_theorem}}\label{proof}
In this section, we present the proof of Proposition \ref{thm:structural_theorem}. We begin by introducing a key lemma which states that by mixing some density matrices \(\rho_1,\rho_2,...,\rho_N\) at some probability, the resulting density matrix has upper-bounded entropy saturated iff different \(\rho_i\) are supported on orthogonal subspaces.

\begin{lemma}
   \label{lemma:ortho_lemma} Let  \(\rho_1,\rho_2,...,\rho_N\) be a collection of density matrices. Let \(\rho=\sum_{i=1}^{N} p_i \rho_i\) be a mixture of all density matrices with probability p, then,
\begin{equation}
\label{eq:ortho_lemma}
    S(\rho) \le \sum_{i=1}^{N} p_i S(\rho_i) + S' 
\end{equation}
Where \(S'=-\sum_{i=1}^N p_i \; log(p_i)\) denotes the classical entropy associated with the mixing probability. Moreover, the equality is taken iff \(\rho_i \rho_j=0,\forall i\neq j\).
\end{lemma}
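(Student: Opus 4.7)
The plan is to reduce (\ref{eq:ortho_lemma}) to the standard ensemble entropy bound for pure-state decompositions by attaching a classical register that labels the mixing index. First I would introduce an ancilla $X$ with orthonormal basis $\{|i\rangle\}$ and form the classical-quantum state $\sigma_{AX}=\sum_i p_i \rho_i \otimes |i\rangle\langle i|_X$. Because the blocks are labeled by mutually orthogonal projectors on $X$, the spectrum of $\sigma_{AX}$ is exactly $\{p_i \lambda_{i,k}\}$ where $\lambda_{i,k}$ are the eigenvalues of $\rho_i$, and a direct calculation gives $S(\sigma_{AX})=S'+\sum_i p_i S(\rho_i)$. Since $\mathrm{Tr}_X \sigma_{AX}=\rho$, the desired inequality is equivalent to $S(\mathrm{Tr}_X \sigma_{AX})\le S(\sigma_{AX})$, i.e., the non-negativity of the classical conditional entropy $S(X|A)_\sigma$.

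To establish this, I would spectrally decompose each $\rho_i=\sum_k \lambda_{i,k}|\psi_{i,k}\rangle\langle\psi_{i,k}|$ so that $\rho=\sum_{i,k} q_{i,k}|\psi_{i,k}\rangle\langle\psi_{i,k}|$ with $q_{i,k}=p_i\lambda_{i,k}$, and invoke the ensemble bound $S(\rho)\le H(q)$, whose right-hand side is exactly $S'+\sum_i p_i S(\rho_i)$ by separating $\log(p_i\lambda_{i,k})=\log p_i + \log \lambda_{i,k}$. I would give the ensemble bound inline by purifying the ensemble on a second ancilla $Y$ as $|\Psi\rangle_{AY}=\sum_{i,k}\sqrt{q_{i,k}}\,|\psi_{i,k}\rangle_A\otimes |i,k\rangle_Y$; then $S(\rho_A)=S(\rho_Y)$ by purity, and $\rho_Y$ has diagonal entries $q_{i,k}$ in the $\{|i,k\rangle\}$ basis. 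Since complete dephasing in this basis is a unital channel (equivalently, a uniform mixture of unitaries), it cannot decrease entropy, so $S(\rho)=S(\rho_Y)\le H(q)$.

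For the equality direction, the dephasing step above is tight iff $\rho_Y$ is already diagonal in $\{|i,k\rangle\}$, equivalently iff the vectors $\{|\psi_{i,k}\rangle\}$ with $q_{i,k}>0$ are pairwise orthogonal. Orthogonality within a fixed $i$ holds automatically since the $|\psi_{i,k}\rangle$ are eigenvectors of the same $\rho_i$, so the only non-trivial content is orthogonality between eigenvectors belonging to different $i$, which says $\mathrm{supp}(\rho_i)\perp\mathrm{supp}(\rho_j)$, i.e., $\rho_i\rho_j=0$ for $i\ne j$. The converse is immediate: if the $\rho_i$'s have orthogonal supports, their individual spectral decompositions glue into a global eigendecomposition of $\rho$ with eigenvalues $\{p_i\lambda_{i,k}\}$, saturating the bound. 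The main obstacle I anticipate is making the equality analysis airtight when some $p_i$ or $\lambda_{i,k}$ vanish, which I would handle by restricting all orthogonality claims to vectors with $q_{i,k}>0$ (i.e., pruning zero-weight terms at the start) so that the correspondence between "diagonal $\rho_Y$" and "orthogonal supports" is exact.
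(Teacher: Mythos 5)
Your proof is correct, but it takes a genuinely different route from the paper's. The paper proves the bound for two states by directly expanding $S(p\rho_1+(1-p)\rho_2)$, using operator monotonicity of the matrix logarithm to control the cross terms, then characterizes equality via the integral representation of $\log$ and a Schur-complement computation, and finally handles $N$ states by iterating the two-state result through sequential pairwise mixing. You instead diagonalize each $\rho_i$ to rewrite $\rho$ as a pure-state ensemble with weights $q_{i,k}=p_i\lambda_{i,k}$ and reduce everything to the textbook ensemble bound $S(\rho)\le H(q)$ (purification plus the fact that complete dephasing, a unital channel, cannot decrease entropy), after which the equality condition becomes pairwise orthogonality of the ensemble vectors, i.e., orthogonality of the supports. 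Your route is shorter, treats all $N$ at once, avoids the operator-monotonicity and Schur-complement machinery, and your pruning of zero-weight terms makes the rank-deficient edge cases (where the paper's manipulation of $\log\rho_1$ is delicate) cleaner; the paper's version is more self-contained at the level of explicit matrix identities. The one step you should not leave implicit is the ``only if'' half of ``dephasing is tight iff $\rho_Y$ is already diagonal'': the mixture-of-unitaries framing by itself yields only the inequality. To extract the equality condition you need either strict concavity of the von Neumann entropy (forcing all $U\rho_Y U^\dagger$ in the mixture to coincide, hence $\rho_Y$ diagonal) or the relative-entropy identity $S(\rho_Y\Vert\mathcal{D}(\rho_Y))=S(\mathcal{D}(\rho_Y))-S(\rho_Y)\ge 0$ with equality iff $\rho_Y=\mathcal{D}(\rho_Y)$. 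This is a standard fact (the equality condition for ``projective measurements increase entropy''), but since it carries the entire ``only if'' direction of the lemma it deserves an explicit citation or a two-line argument.
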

The proof is technical, so we will first present the intuition behind this lemma and its implication. Mixing quantum states obviously increases their entropy. The resulting entropy contains two components: the first component comes from the entropies of the initial states to be mixed, shown as the the first term in the right-hand side of (\ref{eq:ortho_lemma}); the second component is due to mixing itself, shown as the the second term in the right-hand side of (\ref{eq:ortho_lemma}). The first part of the theorem states that the final entropy is upper-bounded by the sum of the two component. The second part of the theorem characterizes the situation when the upper-bound is saturated: one has to mix orthogonal states to maximize the final entropy. 
One can also comprehend this lemma from the the perspective of quantum Shannon theory. By re-organizing the terms in the inequality, we obtain \(S(\rho) - \sum_{i=1}^{N} p_i S(\rho_i) \le + S' \). The left-hand side is  known as the Holevo information \cite{nielsen_chuang} which characterizes the capacity of sending classical information with quantum states. It is known that Holevo information is upper-bounded by \(S'\) and it is also known that sending orthogonal states as messages saturates this bound. Non-trivially, we further prove that sending orthogonal states as messages is the only way to saturate the bound.

\begin{proof}
    We begin by mixing two density matrices \(\rho_1\) and \(\rho_2\), and then generalize to mixing N density matrices. Let \(\rho=p \rho_1 + (1-p)\rho_2\). We evaluate \(S(\rho)\) explicitly.
\begin{align}
       S(\rho)&=-Tr[(p \rho_1 + (1-p) \rho_2)  \log(p \rho_1 + (1-p) \rho_2)] \\
       \label{eq:ortho_lemma_expansion_1} 
       &=-Tr[p \rho_1  \log(p)]-Tr[p \rho_1  \log(\rho_1 + \frac{1-p}{p}\rho_2)]\\
       \label{eq:ortho_lemma_expansion_2} 
       &-Tr[(1-p) \rho_2  \log (1-p)]-Tr[(1-p) \rho_2  \log(\rho_2 + \frac{p}{1-p}\rho_1)]
\end{align}
The first and third term are equal to \(S'\), so
\begin{equation}
     S(\rho)=S'-Tr[p \rho_1  \log(\rho_1 + k_1\rho_2)]-Tr[(1-p) \rho_2  \log(\rho_2 + k_2\rho_1)]
\end{equation}
Where we set \(k_1=\frac{1-p}{p} \ge 0, k_2=\frac{p}{1-p} \ge 0\) to simplify the notation. We would like to upper-bound the two trace terms. Focusing on the first term, we observe that \(\rho_1+k_1\rho_2 \ge \rho_1\) in the operator sense: \(A \ge B\) means that \(A-B\) is a positive semidefinite matrix, which is true here because \(\rho_1+k_1 \rho_2 - \rho_1=k_1 \rho_2\) is indeed positive semidefinite. Next, the matrix log is a operator monotone \cite{10.7153/jmi-07-08} so that \(\log(\rho_1+k_1\rho_2) \ge \log(\rho_1)\) as well. Therefore, we can upper-bound the first trace term by replacing \(\log(\rho_1+k_1\rho_2)\) with \(\log(\rho_1)\).
\begin{align}
    -Tr[p \rho_1  \log(\rho_1 + k_1\rho_2)] &= -Tr[p \rho_1  \log(\rho_1)] - Tr[p \rho_1  (\log(\rho_1 + k_1\rho_2)-\log(\rho_1))] \\
    \label{eq:trace_bound_1}&\le -Tr[p \rho_1  \log(\rho_1)] = p S(\rho_1)
\end{align}
Where the last inequality is because \(p \rho_1\) and \(\log(\rho_1 + k_1\rho_2)-\log(\rho_1)\) are both Hermitian and positive semidefinite, so their product is also Hermitian and positive semidefinite, hence the second trace is \(\ge 0\). We can obtain similar results for the second trace term in (\ref{eq:ortho_lemma_expansion_2}), namely
\begin{align}
    -Tr[(1-p) \rho_2  \log(\rho_2 + k_2\rho_1)] &= -Tr[(1-p) \rho_2  \log(\rho_2)] - Tr[(1-p) \rho_2  (\log(\rho_2 + k_2\rho_1)-\log(\rho_2))] \\
    \label{eq:trace_bound_2}&\le -Tr[(1-p) \rho_2  \log(\rho_2)] = (1-p) S(\rho_2)
\end{align}
Plugging the upper bound (\ref{eq:trace_bound_1}) and (\ref{eq:trace_bound_2}) into (\ref{eq:ortho_lemma_expansion_1}) and (\ref{eq:ortho_lemma_expansion_2}), we obtain
\begin{equation}
    S(\rho) \le p S(\rho_1) + (1-p) S(\rho_2) + S'
\end{equation}
Where the inequality is saturated iff \(\log(\rho_1 + k_1\rho_2)-\log(\rho_1)=0\) and \(\log(\rho_2 + k_2\rho_1)-\log(\rho_2)=0\). This completes the proof of the first part of the lemma.

To prove the second part of the the lemma which relates the saturation of the inequality to the support of \(\rho_1\) and \(\rho_2\), we utilize the integral representation of the matrix logarithm \cite{10.7153/jmi-07-08}.
\begin{equation}
    \log(\rho)=\int_0^{\infty} (\frac{1}{1+t}I-(\rho + t I)^{-1}) dt
\end{equation}
Where \(I\) is the identity matrix and \(t\) is an auxiliary variable we integrate over. In the integral representation, \(\log(\rho_1 + k_1\rho_2)-\log(\rho_1)\) can be written as
\begin{equation}
    \log(\rho_1 + k_1\rho_2)-\log(\rho_1)=\int_0^{\infty} ((\rho_1+t I)^{-1}-(\rho_1 + k_1 \rho_2 + t I)^{-1}) dt
\end{equation}
Importantly, if \(A \ge B\), then \(B^{-1} \ge A^{-1}\). Since \(\rho_1 + k_1 \rho_2 + t I \ge \rho_1+t I\), this would imply \((\rho_1+t I)^{-1}-(\rho_1 + k_1 \rho_2 + t I)^{-1}\) being positive semidefinite \(\forall t\). Therefore, to have \(\log(\rho_1 + k_1\rho_2)-\log(\rho_1)=0\), we need \((\rho_1+t I)^{-1}-(\rho_1 + k_1 \rho_2 + t I)^{-1}=0, \forall t\).

Since we have converted the messy matrix log function into some nicer expression, we can evaluate it directly. To understand the support of \(\rho_1\) and \(\rho_2\), we rewrite them in the following way:
\begin{align}
    \rho_1 &= \begin{pmatrix}
\widetilde{\rho} & 0 \\
0 & 0 
\end{pmatrix} \\
\rho_2 &= \begin{pmatrix}
\widetilde{\rho_{11}} & \widetilde{\rho_{12}} \\
\widetilde{\rho_{12}}^\dag & \widetilde{\rho_{22}} 
\end{pmatrix}
\end{align}
Where \(\rho_1\) is supported on the upper-left block with value \(\widetilde{\rho}\), and we divide \(\rho_2\) into four components: the component in the same subspace as \(\rho_1\) which we call \(\widetilde{\rho_{11}}\) , the component in the orthogonal subspace which we call \(\widetilde{\rho_{22}}\) , and the coupling between two subspaces which we call \(\widetilde{\rho_{12}}\). Having \(\rho_1\) and \(\rho_2\) supported on orthogonal subspaces is equivalent to having  \(\widetilde{\rho_{11}}=0\)  and  \(\widetilde{\rho_{12}}=0\). To show that, we evaluate the condition \((\rho_1+t I)^{-1}-(\rho_1 + k_1 \rho_2 + t I)^{-1}=0, \forall t\) in this basis. To begin with,
\begin{align}
    \rho_1+t I &= \begin{pmatrix}
\widetilde{\rho} + t I & 0 \\
0 & t I
\end{pmatrix} \\
\rho_1 + k_1 \rho_2 + t I &= \begin{pmatrix}
\widetilde{\rho} + k_1 \widetilde{\rho_{11}} + tI & k_1 \widetilde{\rho_{12}} \\
k_1\widetilde{\rho_{12}}^\dag & k_1\widetilde{\rho_{22}} +tI
\end{pmatrix}
\end{align}
Now we calculate the matrix inverse using Schur's complement. We only calculate the upper-left block since that is all we need.
\begin{align}
    (\rho_1+t I)^{-1} &= \begin{pmatrix}
(\widetilde{\rho} + t I)^{-1} & 0 \\
0 & \frac{1}{t} I
\end{pmatrix} \\
(\rho_1 + k_1 \rho_2 + t I)^{-1} &= 
\begin{pmatrix}
(\widetilde{\rho} + k_1 \widetilde{\rho_{11}} + tI - \widetilde{\rho_{12}}(\widetilde{\rho_{22}}+ tI)^{-1}\widetilde{\rho_{12}}^\dag )^{-1} & ... \\
... & ...
\end{pmatrix}
\end{align}
In order to have  \((\rho_1+t I)^{-1}-(\rho_1 + k_1 \rho_2 + t I)^{-1}=0\), we need the upper-left block to cancel out. Thus,
\begin{equation}
    \rho_1+t I = \widetilde{\rho} + k_1 \widetilde{\rho_{11}} + tI - \widetilde{\rho_{12}}(\widetilde{\rho_{22}}+ tI)^{-1}\widetilde{\rho_{12}}^\dag, \forall t
\end{equation}
The only way this relation is satisfied is to have  \(\widetilde{\rho_{11}}=0\) and \(\widetilde{\rho_{12}}=0\), which is exactly the orthogonality condition. To sum up, we have proven that if  (\ref{eq:ortho_lemma}) is satisfied for mixing two matrices, then they have to be supported on orthognal subspaces.

Generalizing to mixing N states can be accomplished by mixing sequentially. First we mix \(\rho_1\) and \(\rho_2\) with probability \(\frac{p_1}{p_1+p_2}\) and \(\frac{p_2}{p_1+p_2}\) to obtain \(\rho_1'=\frac{p_1}{p_1+p_2} \rho_1 + \frac{p_2}{p_1+p_2} \rho_2\). Next, we mix \(\rho_1'\) with \(\rho_3\) with probability \(\frac{p_1+p_2}{p_1+p_2+p_3}\) and \(\frac{p_3}{p_1+p_2+p_3}\) to obtain \(\rho_2'=\frac{p_1+p_2}{p_1+p_2+p_3} \rho_1' + \frac{p_3}{p_1+p_2+p_3} \rho_3\), and so on. Repeating this process, we obtain the final mixed state \(\rho=\sum_{i=1}^N p_i \rho_i\). At each step, we can apply the result for mixing two density matrices to upper-bound the entropy and to characterize the orthogonality between different \(\rho_i\).  This would produce the lemma for mixing N states.
\end{proof}
With Lemma \ref{lemma:ortho_lemma}, we are ready to prove the structural theorem of efficient teleporters. We restate the theorem below.
\begin{theorem}
    Consider \(\rho_{ABC}\) with \(I(A:C)=0\). Suppose \(B_1,B_2,...,B_N\) are measured in the computational basis. If \(\delta I=N\), then the resulting density matrix supported on \(A,B_1,B_2,...,B_N,C\) , namely \(\rho_{A\mathcal{N}[B_1B_2...B_N]C}\) has the following structure.
\begin{equation}
\rho_{A\mathcal{N}[B_1B_2...B_N]C}=\frac{1}{2^N}\sum_{i=0}^{2^N-1}\ket{i}\bra{i}_{B_1B_2...B_N} \rho_{AC|i}
\end{equation}
    Where the post-selected states \(\rho_{AC|i}\) are mutually orthogonal, namely \(\rho_{AC|i}\rho_{AC|j}=0\), \(\forall i \neq j\).
    \end{theorem}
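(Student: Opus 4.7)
The plan is to reduce the statement to the equality clause of Lemma \ref{lemma:ortho_lemma}, which converts a saturated entropy inequality on \(AC\) into the desired orthogonality. First I would extract the immediate consequences of \(\delta I = N\): chaining Observation \ref{miebound1} with the trivial outcome-entropy bound gives \(\delta I \leq S(A\mathcal{N}[B]C) \leq H(\{p_i\}) \leq N\), so all three inequalities must be tight. This forces the outcome distribution to be uniform, \(p_i = 2^{-N}\), and also \(\delta I = S(A\mathcal{N}[B]C)\). Feeding the latter into the dilation identity Eq.~(\ref{cmi0}) together with the fact that \(I(B:E) = S(E)\) automatically for any measurement channel (the environment carries a classical copy of the outcome register), the remaining terms force \(I(A:E) = I(C:E) = 0\). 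Since \(E\) is the classical outcome register, these decoupling conditions translate into the outcome-independence of the single-site marginals: \(\rho_{A|i} = \rho_A\) and \(\rho_{C|i} = \rho_C\) for every \(i\).

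Next, I would use Proposition \ref{mie_cmi} and the outcome-independence from the previous step to compute \(I(A:C|\mathcal{N}[B]) = \sum_i p_i [S(\rho_{A|i}) + S(\rho_{C|i}) - S(\rho_{AC|i})] = S(A) + S(C) - \sum_i p_i S(\rho_{AC|i})\). Comparing to the pure-state identity \(I(A:C|B) = S(A) + S(C) - S(B)\) rewrites the hypothesis as \(\delta I = S(B) - \sum_i p_i S(\rho_{AC|i}) = N\). Each post-measurement state \(\rho_{AC|i}\) is itself pure, since it arises from projecting \(\ket{\psi}_{ABC}\) onto a computational-basis eigenstate of \(B\) and tracing \(B\) out, so the sum vanishes and I conclude \(S(B) = N\). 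Purity of the initial state then gives \(S(\rho_{AC}) = S(B) = N\).

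The final step applies Lemma \ref{lemma:ortho_lemma} to the ensemble \(\{p_i, \rho_{AC|i}\}\), whose average is exactly the reduced state \(\rho_{AC}\) (which is invariant under a computational-basis measurement of \(B\), since dephasing on \(B\) commutes with the partial trace over \(B\)). The lemma reads \(S(\rho_{AC}) \leq \sum_i p_i S(\rho_{AC|i}) + H(\{p_i\}) = 0 + N = N\), exactly matching the value of \(S(\rho_{AC})\) established in the preceding paragraph. Its equality clause then forces \(\rho_{AC|i}\rho_{AC|j} = 0\) for all \(i \neq j\), which is the desired orthogonality. The main obstacle I anticipate is coordinating the two independent saturations that together close the argument: the dilation identity Eq.~(\ref{cmi0}) is needed to secure the outcome-independence that reduces \(\delta I\) to the clean form \(S(B) - \sum_i p_i S(\rho_{AC|i})\), while the subadditivity saturation in Lemma \ref{lemma:ortho_lemma} is what upgrades the resulting entropy equality into orthogonality. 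Neither ingredient alone suffices; a naive application of the lemma without first establishing the decoupling yields only the weak bound \(S(\rho_{AC}) \leq N\), without forcing equality.
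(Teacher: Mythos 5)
Your proposal lands on the same crux as the paper's proof---the equality clause of Lemma~\ref{lemma:ortho_lemma} applied to the ensemble \(\{p_i,\rho_{AC|i}\}\) whose average is \(\rho_{AC}\)---but you reach the required saturation by a different route. The paper writes \(\delta I=\sum_i p_i\,[S(A|i)+S(C|i)-S(AC|i)]\), bounds the first two sums by concavity of the entropy and the third by the Holevo inequality of Lemma~\ref{lemma:ortho_lemma}, and observes that \(\delta I=N\) forces all three bounds to be tight simultaneously; orthogonality then drops out of the Holevo equality case. You instead secure the concavity saturations first, via the dilation identity Eq.~\eqref{cmi0}: from \(\delta I=S(E)\) and \(I(B:E)=S(E)\) you deduce \(I(A:E)=I(C:E)=0\), hence \(\rho_{A|i}=\rho_A\) and \(\rho_{C|i}=\rho_C\), and only then isolate the \(AC\) term. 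This is valid and buys something: it shows the single-subsystem marginals are literally outcome-independent (the paper only obtains the corresponding entropy equalities), and it does not actually use the hypothesis \(I(A:C)=0\). The cost is an extra dependence on the pure-state dilation picture that the paper's direct inequality chain avoids.

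There is, however, one genuine misstep: the claim that each \(\rho_{AC|i}\) is pure, from which you conclude \(\sum_i p_iS(AC|i)=0\) and \(S(B)=N\). That holds only if the measured qubits \(B_1\ldots B_N\) exhaust \(B\). In the intended scope of the theorem---e.g.\ the Hayden--Preskill application and Corollary~\ref{col:mixed_bell_measurement}, where \(\rho_{AC|i}\) is a normalized projector of rank \(2^{N_A+N_C-N}>1\)---\(B\) retains unmeasured qubits that are traced out, so \(\rho_{AC|i}\) is mixed and \(S(B)\neq N\) in general. Fortunately the fix is immediate and makes your argument cleaner: your own identity \(\delta I=S(B)-\sum_i p_iS(AC|i)=N\) already gives \(\sum_i p_iS(AC|i)=S(AC)-N\) (using \(S(AC)=S(B)\) for the globally pure initial state), which is precisely the saturation of Lemma~\ref{lemma:ortho_lemma}'s inequality for the ensemble averaging to \(\rho_{AC}\); no purity of the conditional states is needed.
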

\begin{proof}
    The probability \(\frac{1}{2^N}\) of each term comes from saturating the upper bound of teleportation \(\delta I=S(ABC)=N\). In order to have \(S(ABC)=N\), each measurement outcome has to be equally probably with probability \(\frac{1}{2^N}\).

We will use Lemma \ref{lemma:ortho_lemma} to prove mutual orthogonality. First, we rewrite \(\delta I\) in the following way.
\begin{equation}
\begin{split}
    \delta I &= \sum_i p_i I(A:C|i) \\
    &= \frac{1}{2^N}\sum_i^{2^N-1} S(A|i) + \frac{1}{2^N}\sum_i^{2^N-1} S(C|i) - \frac{1}{2^N}\sum_i^{2^N-1} S(AC|i)
\end{split}
\end{equation}
By convexity of the von Neumann entropy, \(\frac{1}{2^N}\sum_i^{2^N-1} S(A|i) \le S(A)\) and \(\frac{1}{2^N}\sum_i^{2^N-1} S(C|i) \le S(C)\). On the other hand, the Holevo bound (first part of Lemma \ref{lemma:ortho_lemma}) says that \(\frac{1}{2^N}\sum_i^{2^N-1} S(AC|i) \ge S(AC) - N\). Therefore, the only way to have \(\delta I=N\) is to saturate all three inequalities. In particular, since the last Holevo bound is saturated, we can invoke Lemma \ref{lemma:ortho_lemma} which immediately results in \(\rho_{AC|i}\rho_{AC|j}=0,\forall i \neq j\).
\end{proof}

With some additional assumptions, we can obtain the second part of Theorem 2.
\begin{corollary}
\label{col:mixed_bell_measurement} Given an efficient teleporter \(\rho_{ABC}\) teleporting N bits with \(\rho_{AC} = \frac{1}{2^{N_A+N_C}}I_{2^{N_A+N_C}}\), where \(N_A, N_C\) are the number of qubits in system A and C, and \(I_{2^{N_A+N_C}}\) is the identity matrix of size \(2^{N_A+N_C}\). Then \(\forall i\), \(\rho_{AC|i}\) has the following form in some basis:
    \begin{align}
\rho_{AC|i}=\begin{pmatrix}
0 & ... & 0 & ... & 0 \\
... & ... & 0 & ... & ... \\
0 & 0 & \widetilde{\rho_{AC|i}} & 0 & 0 \\
... & ... & 0 & ... & ... \\
0 & ... & 0 & ... & 0 
\end{pmatrix}&=\frac{1}{d}\begin{pmatrix}
0 & ... & 0 & ... & 0 \\
... & ... & 0 & ... & ... \\
0 & 0 & I_{d} & 0 & 0 \\
... & ... & 0 & ... & ... \\
0 & ... & 0 & ... & 0 
\end{pmatrix} \\
\rho_{A|i}&=\frac{1}{2^{N_A}} I_{2^{N_A}} \\
\rho_{C|i}&=\frac{1}{2^{N_C}} I_{2^{N_C}}
    \end{align}
    Where \(d=\frac{1}{2^{N_A+N_C-N}}\), and \(I_{d}\) is the identity matrix of size \(d\). Equivalently, \(\exists\) local unitaries \(Q_A, Q_C\) such that after application, the collection of post-selected states \(\{ (Q_A \otimes Q_C)\rho_{AC|i} (Q_A \otimes Q_C)^\dag\}_i\) are mix stabilizer states stabilized by some Bell stabilizers, and different \(i\) correspond to different signs of the Bell stabilizers.
\end{corollary}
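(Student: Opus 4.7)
The plan is to first pin down the explicit matrix form by tracing which inequalities must saturate at maximal CMI gain under the extra assumption \(\rho_{AC} = I/2^{N_A+N_C}\), and then to translate the resulting orthogonal subspace decomposition into the Bell-stabilizer structure by analysing the effective measurement operators on \(AC\).

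For the matrix form I would revisit the decomposition \(\delta I = \frac{1}{2^N}\sum_i S(A|i) + \frac{1}{2^N}\sum_i S(C|i) - \frac{1}{2^N}\sum_i S(AC|i)\) from the proof of Proposition~\ref{thm:structural_theorem}, together with the convexity bounds \(\sum_i p_i S(A|i) \leq S(A) = N_A\) and \(\sum_i p_i S(C|i) \leq S(C) = N_C\) (both using that \(\rho_{AC}\) maximally mixed forces \(\rho_A,\rho_C\) maximally mixed) and the Holevo bound \(\sum_i p_i S(AC|i) \geq S(AC) - N\). Maximality \(\delta I = N\) forces all three to saturate simultaneously. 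Strict concavity of the von Neumann entropy combined with the convexity saturation pins down \(\rho_{A|i} = I_{2^{N_A}}/2^{N_A}\) and \(\rho_{C|i} = I_{2^{N_C}}/2^{N_C}\) for every \(i\); Holevo saturation together with Lemma~\ref{lemma:ortho_lemma} gives mutually orthogonal supports \(V_i\) for the \(\rho_{AC|i}\). Using \(\rho_{AC} = 2^{-N}\sum_i \rho_{AC|i} = I/2^{N_A+N_C}\) and restricting both sides to each \(V_i\) then yields \(\rho_{AC|i} = (1/d) P_{V_i}\) with \(\dim V_i = d = 2^{N_A+N_C-N}\); choosing a basis aligned with the orthogonal decomposition \(\mathcal{H}_A \otimes \mathcal{H}_C = \bigoplus_i V_i\) produces the block form in the statement.

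For the Bell-stabilizer characterisation I would analyse the effect of each measured qubit. The \(k\)-th measured qubit partitions the outcomes \(i\) into two halves, giving orthogonal projectors \(P^0_k, P^1_k\) on \(AC\), each a sum of the \(P_{V_i}\)'s. The operator \(M_k := P^0_k - P^1_k\) is a Hermitian involution on \(AC\) whose partial traces on both \(A\) and \(C\) vanish, because each \(P^{0/1}_k\) inherits maximally mixed marginals from the matrix form just derived. Expanding \(M_k = \sum_{\mu,\nu\neq 0} c^{(k)}_{\mu\nu}\, P^A_\mu \otimes P^C_\nu\) in the Pauli basis, applying an SVD to the real matrix \(c^{(k)}\) to absorb the left/right singular rotations into local unitaries on \(A\) and \(C\) (via the adjoint action on Paulis), and then enforcing \(M_k^2 = I\) forces the diagonalised coefficient matrix to be rank one, so that a single \(M_k\) becomes \(\pm P^A_\mu \otimes P^C_\nu\), i.e.\ a Bell stabilizer. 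For the full commuting family \(\{M_k\}_{k=1}^N\) I would proceed inductively: fix the local unitaries that bring \(M_1\) to Bell form, then use the residual local-unitary freedom commuting with the already-fixed stabilizer to diagonalise \(M_2\), and iterate.

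The hard part is this simultaneous-rotation step: showing that the residual local-unitary freedom after fixing \(M_1,\dots,M_{k-1}\) to Bell form always suffices to rotate \(M_k\) to Bell form as well. The key input will be that the commutation relations \([M_k,M_l]=0\) sharply restrict the allowed Pauli support of \(M_k\) in the frame where the earlier \(M_l\)'s are already Bell stabilizers; within this restricted support the traceless-partial-trace condition together with \(M_k^2 = I\) should again force a rank-one SVD structure of the (now smaller) correlation matrix, so that the local rotations commuting with the already-fixed Bell stabilizers suffice to bring \(M_k\) to Pauli-tensor-Pauli form. Once all \(M_k\) are simultaneously Bell stabilizers, their joint \(\pm 1\) eigenspaces recover the \(V_i\), and the post-selected states \(\rho_{AC|i}\) become mixed stabilizer states whose sign pattern is precisely the binary label of the outcome \(i\), as claimed.
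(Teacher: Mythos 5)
Your derivation of the explicit matrix form is correct and follows essentially the paper's route: saturating the two convexity bounds forces \(\rho_{A|i}\) and \(\rho_{C|i}\) to be maximally mixed for every \(i\), and saturating the Holevo bound invokes Lemma~\ref{lemma:ortho_lemma} to give mutually orthogonal supports \(V_i\). Your way of fixing the block dimension --- restricting \(\rho_{AC}=2^{-N}\sum_i\rho_{AC|i}=I/2^{N_A+N_C}\) to each \(V_i\), reading off \(\rho_{AC|i}=d^{-1}P_{V_i}\), and getting \(\dim V_i=2^{N_A+N_C-N}\) from normalization --- is in fact slightly more direct than the paper's, which deduces equal \(d_i\) from \(2^{-N}\sum_i\log_2 d_i=N_A+N_C-N\) together with \(\sum_i d_i=2^{N_A+N_C}\) and strict concavity of the logarithm. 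This half of the proposal is fine.

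The Bell-stabilizer half contains a genuine gap, in two places. First, the step ``apply an SVD to \(c^{(k)}\) and absorb the left/right singular rotations into local unitaries via the adjoint action on Paulis'' does not work once \(A\) or \(C\) contains more than one qubit: the adjoint action of \(U(2^{N_A})\) on traceless Hermitian operators realizes only a low-dimensional subgroup of \(SO(4^{N_A}-1)\), so the singular vectors of \(c^{(k)}\) are merely an operator Schmidt decomposition into traceless Hermitian factors --- they are unitarily conjugate to Pauli operators only if they happen to have spectrum \(\pm1\) with equal multiplicities, which nothing in the construction guarantees. (For \(N_A=N_C=1\) the argument does close, because \(SU(2)\to SO(3)\) is surjective and \(M_k^2=I\) then kills all but one singular value; this is exactly the case where your sketch is complete.) Second, the inductive ``simultaneous rotation'' step for the commuting family \(\{M_k\}\), which you yourself flag as the hard part, is the actual content of the claim and is not carried out; asserting that the commutation relations ``should'' force a rank-one structure is a conjecture, not a proof. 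The paper sidesteps the per-qubit analysis entirely: having established that the \(\rho_{AC|i}\) are \(2^N\) mutually orthogonal, equal-rank, flat states all of whose single-sided marginals are maximally mixed, it argues by explicit reconstruction that this collection shares its spectral and entanglement data with the \(2^N\) sign-sectors of \(N\) fixed Bell stabilizers, and hence can be carried onto them by a single unitary of the required form. To complete your route you would need either to prove the rank-one claim for traceless-marginal involutions directly (which requires more input than \(M_k^2=I\) for multiqubit factors), or to drop the per-\(M_k\) analysis and argue at the level of the whole collection as the paper does.
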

\begin{proof}
We already know that different \(\rho_{AC|i}\) are supported on orthogonal subspaces, so one can find a basis such that
\begin{equation}
    \rho_{AC}=\frac{1}{2^N}\begin{pmatrix}
\widetilde{\rho_{AC|0}} & 0 & 0 & ... & 0 \\
0 & \widetilde{\rho_{AC|1}} & 0 & ... & 0 \\
0 & 0 & \widetilde{\rho_{AC|2}} & ... & 0 \\
... & ... & ... & ... & ... \\
0 & 0 & 0 & ... & \widetilde{\rho_{AC|2^N}}
\end{pmatrix}
\end{equation}
And under the assumption \(\rho_{AC} = \frac{1}{2^{N_A+N_C}}I_{2^{N_A+N_C}}\), individual \(\widetilde{\rho_{AC|i}}\) has to be proportional to the identity matrix as well. Therefore, we only have to show that
\begin{enumerate}
    \item \(\widetilde{\rho_{AC|i}}\) all have dimension \(d\).
    \item \(\rho_{A|i}=\frac{1}{2^{N_A}} I_{2^{N_A}}\) and \(\rho_{C|i}=\frac{1}{2^{N_C}} I_{2^{N_C}}\)
\end{enumerate}
Let us denote the dimension of \(\widetilde{\rho_{AC|i}}\) to be \(d_i\). Given that \(I(A:C|i)=S(A|i) + S(C|i) - S(AC|i)\), we can evaluate \(S(AC|i)\) directly and upper-bound \(S(A|i)\) and \(S(C|i)\).
\begin{align}
    S(A|i) &\le N_A \\
    S(C|i) &\le N_C \\
    S(AC|i)&=\log_2(d_i)
\end{align}
Therefore, \(I(A:C|i) \le N_A + N_C - \log_2(d_i)\). Averaging over all the outcome, we have
\begin{equation}
    \delta I = \frac{1}{2^N} \sum_{i=0}^{2^N-1} S(A|i) + \frac{1}{2^N} \sum_{i=0}^{2^N-1} S(C|i) - \frac{1}{2^N} \sum_{i=0}^{2^N-1} \log_2(d_i) \le  N_A + N_C - (N_A + N_C - N)
\end{equation}

To saturate the CMI upper bound, all inequalities have to be saturated, therefore we have
\begin{align}
    S(A|i) &= N_A, \forall i \\
    S(C|i) &= N_C, \forall i \\
    \frac{1}{2^N} \sum_{i=0}^{2^N-1} \log_2(d_i) &= N_A + N_C - N
\end{align}
The first two equality implies that \(\rho_{A|i}=\frac{1}{2^{N_A}} I_{2^{N_A}}, \rho_{C|i}=\frac{1}{2^{N_C}} I_{2^{N_C}}, \forall i\). Under the constrain \(\sum_{i=0}^{2^N-1} d_i = 2^{N_A+N_C}\), the only way to satisfy the last equality is to have \(d_i = d, \forall i\). This completes the proof of the first statement.

The second equivalent statement can be proven by explicit reconstruction. Randomly draw N Bell stabilizers. There are \(2^N\) mixed stabilizer states with the Bell stabilizers but different signs. One can always construct a unitary that rotates the mixed stabilizer states to  \(\rho_{AC|i}\) since they have the same entanglement structure.
\end{proof}
Note that the strategy to prove Corollary \ref{col:mixed_bell_measurement} does not work without the assumption \(\rho_{AC} = \frac{1}{2^{N_A+N_C}}I_{2^{N_A+N_C}}\). This is because post-selected states do not necessarily have the same amount of entanglement. For example, there is no upper bound to the entanglement of the post-selected states, even though their averaged value is upper-bounded.

\section{Teleportation in Hayden-Preskill Protocol}
In this section, we further elaborate on teleportation in the Hayden-Preskill protocol. We will briefly review the literature in the context of understanding the teleporting properties. Then we will discuss the Haar-random Hayden-Preskill protocol, why it is an inefficient teleporter, and some of its features.

\subsection{Yoshida-Kitaev Decoder}
We briefly review the Yoshida-Kitaev Decoder for the Hayden-Preskill protocol in this section without refering to much of its background \cite{10.48550/arxiv.1710.03363}. The key idea is that the maximally-entangled, fully-teleported state shown in the right-hand side Fig. \ref{fig:hayden_preskill_decoder} is the eigenstate of \(U \otimes U^*\). Therefore, \(U \otimes U^*\) becomes transparent when getting close to this eigen-subspace. Utilizing this feature, One could rotate to this eigenstate by projecting some qubits into the maximally entangled state to enhance the overlap with the fully-teleported state, as shown in the right-hand side  Fig. \ref{fig:hayden_preskill_decoder}. Crucially, Yoshida and Kitaev showed that if \(U\) is scrambling enough (in terms of exponentially small out-of-time correlator), then one only needs to project slightly more than \(N_A\) or \(N_C\) qubits. This is the rationale behind doing Bell measurements in the main text. 

When the Bell measurement yields other outcome other than \(\frac{1}{\sqrt{2}}(\ket{00}+\ket{11})\), in general there is no guarantee on teleportation, and the measurement protocol succeed with exponentially small probability. Nevertheless, when \(U\) is Clifford, then it does not matter what is the outcome of the Bell measurement because there is a local decoding procedure to convert the other three Bell states to \(\frac{1}{\sqrt{2}}(\ket{00}+\ket{11})\). We explain the decoder in the next section.
\begin{figure}
\scalebox{0.5}{\includegraphics{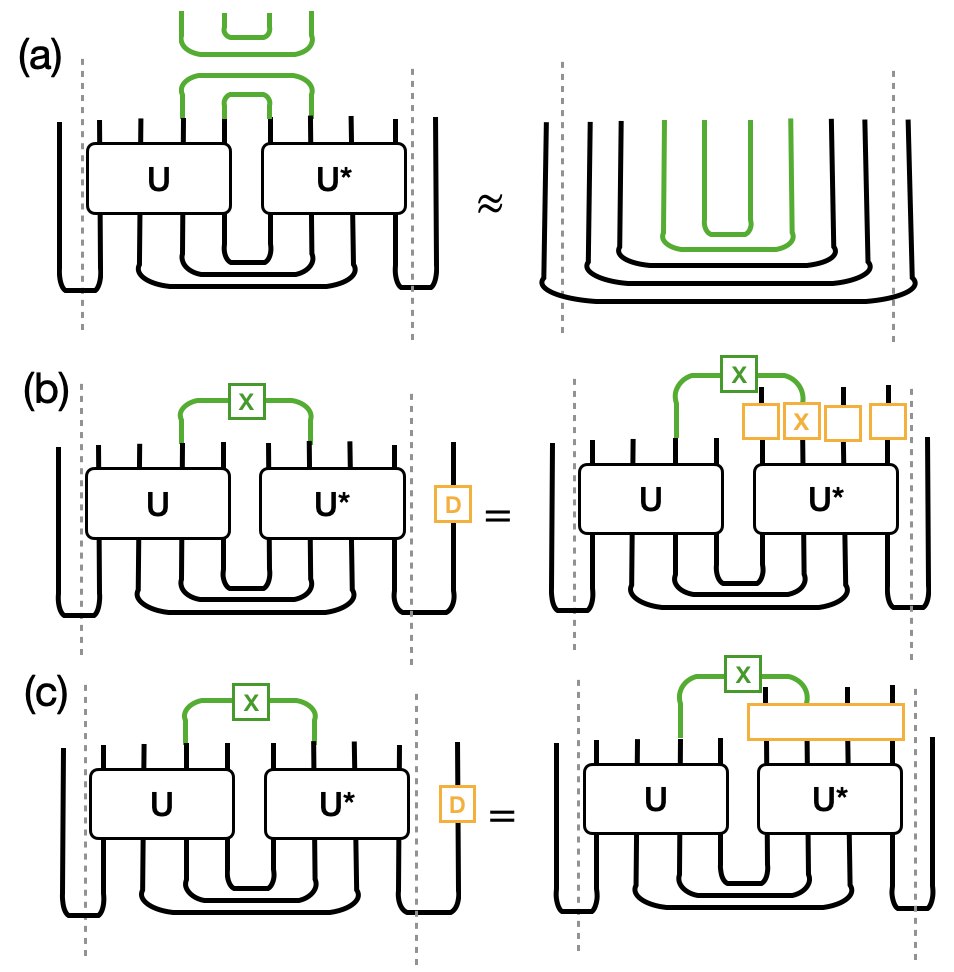}}
\caption{\label{fig:hayden_preskill_decoder}(a) projecting a few pair of qubits to \(\frac{1}{\sqrt{2}}(\ket{00}+\ket{11})\) rotate the state towards the maximally entangled eigenstate. (b) For Clifford Hayden Preskill, one can apply a decoding unitary to covert any other Bell basis back to \(\frac{1}{\sqrt{2}}(\ket{00}+\ket{11})\) (c) Similar decoding strategy is impossible in Haar Hayden Preskill because the decoding operation has operator entanglement.}
\end{figure}

\subsection{Efficient Teleportation in Clifford Hayden-Preskill via Decoding}\label{clifford_hayden_preskill}
We now explain why teleportation is efficient no matter what is the measurement outcome, given that \(U\) is Clifford. This section is adapted from \cite{10.48550/arxiv.2106.15628}. The three measurement outcomes other than \(\frac{1}{\sqrt{2}}(\ket{00}+\ket{11})\) correspond to having a Pauli operator insertion as shown in Fig. \ref{fig:hayden_preskill_decoder}(b). To remove the Pauli operator, we insert another Pauli operator \(D\) on the right-hand side. Since a Clifford operator maps a Pauli string to another Pauli string, we can commute \(D\) through \(U\) and replace it with the evolved Pauli string. Importantly, there is no operator entanglement in a Pauli string, so we can ignore most of the Pauli operators that do not intersect with the green line. For the Pauli operators that do intersect with the green line, we can always choose \(D\) such that after commuting through it evolves into the same Pauli operator as the ones coming from Bell measurement. Thus, they will be annihilated and result in the desired Bell projector \(\frac{1}{\sqrt{2}}(\ket{00}+\ket{11})\). 

On the other hand, a generic \(U\) will evolve \(D\) to some big operator with operator entanglement (Fig. \ref{fig:hayden_preskill_decoder}(c)). Therefore, this decoding method is not applicable beyond Clifford dynamics, and we will show in the next section that teleportation is indeed inefficient when \(U\) is Haar-random.

\subsection{Inefficient Teleportation in Haar-random Hayden-Preskill}\label{haar_hayden_preskill}
We now discuss inefficient teleportation in Haar-random Hayden-Preskill protocol and some of the features. We perform the same randomized simulation as described in the main text but choose \(U\) Haar-randomly. Fig. \ref{fig:haar_hayden_preskill}(a) shows how the averaged \(\delta I\) grows with increased number of measurements. Clearly teleportation is inefficient and becomes more inefficient when the size of B increases. In fact, one can observe a critical value above which \(\delta I\) grows linearly. This critical value appears to be at the point when the number of unmeasured qubits in B is roughly the size of A and C, regardless of what is the size of B. This behavior is consistent with Page's theorem, even though the underlying dynamics possess more structures.

To understand the special role of the Bell projector \(\frac{1}{\sqrt{2}}(\ket{00}+\ket{11})\) which we will call the good projector, we conduct another sequence of simulations where we first post-select the good projector for a few measurements, and then post-select one of the three other projectors (called bad projector) to fail the protocol. We increase the number of measurements where we post-select the good projector and compare the value of \(\delta I\) in Fig. \ref{fig:haar_hayden_preskill}(b,c). One can observe that \(\delta I\) quickly increases as long as we post-select the good projector. However, whenever a bad projector is selected, \(\delta I\) immediately collapses to the average value. We also observe the similar behavior in the fidelity to the eigenstate presented in the right-hand side of Fig. \ref{fig:hayden_preskill_decoder}(a). The fidelity at an increasing rounds of post-selection is shown in Fig. \ref{fig:haar_hayden_preskill}(c). As we post-select on the good projector for more measurements, the resulting state approaches the desired eigenstate. When a bad projector is post-selected, however, the fidelity immediately drops to zero. As an comparison, we compare the fidelity lower bound calculated in \cite{10.48550/arxiv.1710.03363} to our numerics and observe very tight compliance. 
\begin{figure}
\scalebox{0.5}{\includegraphics{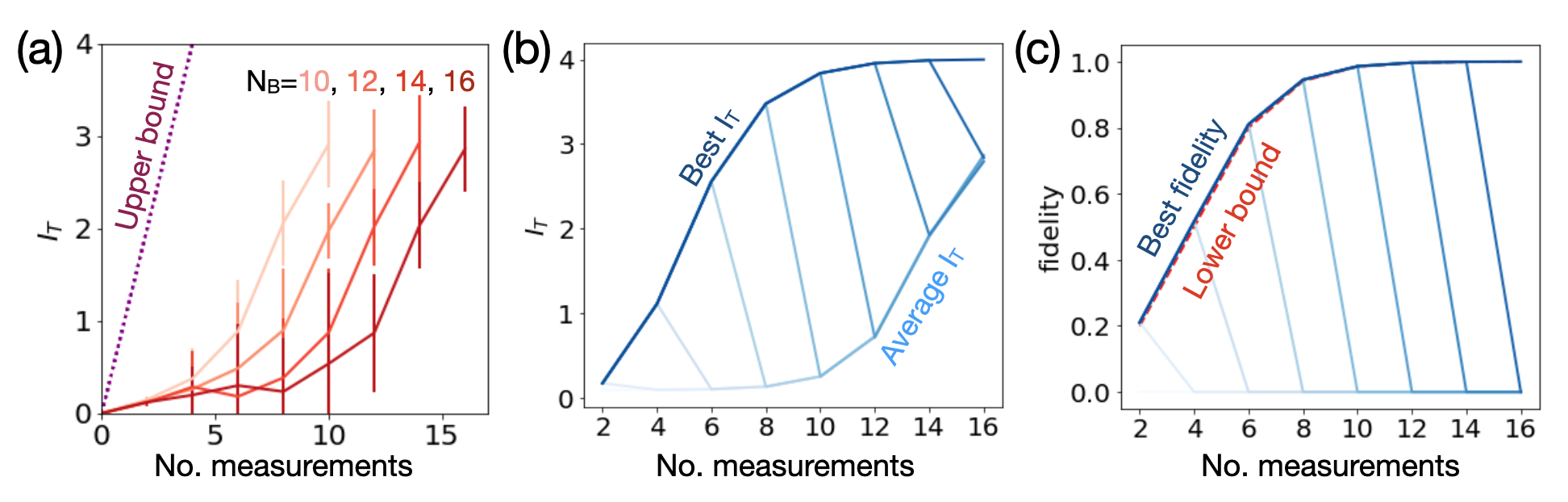}}
\caption{\label{fig:haar_hayden_preskill}(a) sub-maximal CMI growth in Haar Hayden Preskill. (b) Growth of CMI is close to maximal when one post-select the measurement outcomes to \(\frac{1}{\sqrt{2}}(\ket{00}+\ket{11})\), but post-selecting to any other Bell state immediately collapse the CMI to its average value (c) The fidelity to the maximally entangled eigenstate when one post-select the measurement outcomes to \(\frac{1}{\sqrt{2}} (\ket{00}+\ket{11})\). (red line) The fidelity lower bound calculated in \cite{10.48550/arxiv.1710.03363}.}
\end{figure}

\section{Teleportation in Random Unitary Circuits}
In this section, we further elaborate on the teleportation properties of the random unitary circuits. We begin by understanding the teleportation behavior as the circuit approaches the thermalization regime. Then, we qualitatively understand the shallow-to-intermediate depth regime via the statistical mechanics model.

\subsection{Transition into Thermalization Regime}
In this section, we focus on the regime when the circuit depth is high. We take the constant-depth cut shown in Fig. \ref{fig:thermalization}(a) and plot \(\delta I\) along the cut in Fig. \ref{fig:thermalization}(b). \(\delta I\) increases at higher depth until some point when the behavior of \(\delta I\) saturates (green line in Fig. \ref{fig:thermalization}(a)). The saturated curve comply with the Page behavior: one would only observe non-trivial \(\delta I\) when the number of unmeasured qubits in B is smaller than the size of A plus C, and \(\delta I\) grows linearly from there. To further test the consistency with the Page behavior, we replot the data in the log scale as shown in Fig. \ref{fig:thermalization}(c). \(\delta I\) exhibits a exponential decay below the critical number of measurements, which is also consistent with the exponentially small Page correction. 
\begin{figure}
\scalebox{0.5}{\includegraphics{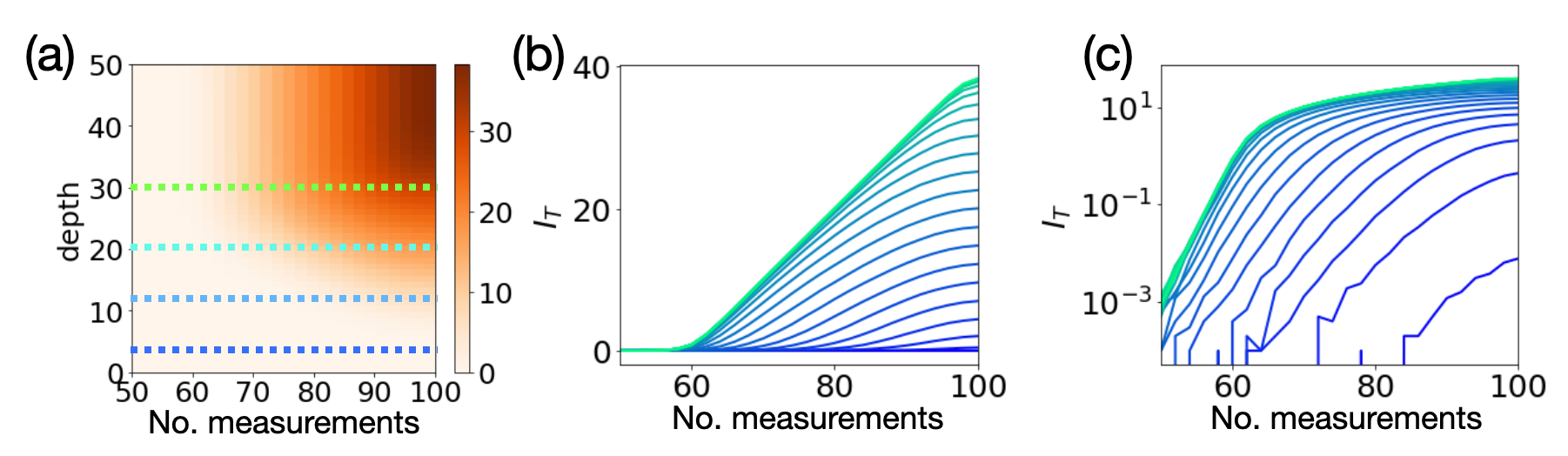}}
\caption{\label{fig:thermalization}(a) We take cross-sections at increasing circuit depth to observe thermalization behavior. (b) Growth of CMI at the cross-setion. (c) Same as (b) but in log scale to observe Page corrections.}
\end{figure}

\subsection{Statistical Mechanics Model for the interpolation}\label{stat_mech}
Finally, we discuss the interpolation from the finite-time teleportation to the Page behavior via the statistical mechanics model. We will try to under the model at the mean-field level which is sufficient to reproduce the numerical observations. We consider a d-dimensional shallow circuit. There is a known procedure \cite{10.1103/physrevb.101.104301} to map the circuit to a (d+1)-dimensional Ising model with ferromagnetic interactions. Since the circuit is shallow, we can integrate out the time direction and reduce to a d-dimension Ising model with coupling scaled with the circuit depth. The entropy can be calculated from the domain-wall free energy at different boundary field conditions. The relevant entropies and the corresponding boundary field conditions are presented in Fig. \ref{fig:stat_mech}(a,b). The spins subject to field in B are those unmeasured in the circuit picture, and they appear randomly. Therefore, the problem reduces to penetrating the magnetic ordering in A and C into B subject to random pinning field in the opposite direction.

We solve the model by using a mean-field ansatz where we assume the entire A, B, and C individually has a mean-field variable \(S_A, S_B, S_C\). To that end, we can write the partition function as
\begin{equation}
    Z_{MF}=\exp(h_A S_A + h_B S_B + h_C S_C -J S_A S_B - J S_B S_C)
\end{equation}
Where \(J\) is the coupling that scales with the circuit depth, and \(h_A, h_B, h_C\) are the summed local field. Since the pinning field in B is randomly placed, we can simply sum them up in the coarse-grained picture to obtain \(h_B\). We compute the individual entropies and \(\delta I\) from this mean-field model and present the data in Fig. \ref{fig:stat_mech}(c,d,f). We observe a very good agreement with the circuit data on the qualitative level.

With the mean-field model, we can explain some of the features as shown in Fig. \ref{fig:stat_mech}(f) which are also observed in the circuit data. At the blue star, there is no pinning field in B, so the problem becomes penetrating magnetic ordering from A and C into a thermally fluctuating but free magnet B. As the coupling increases, we observe a second-order ferromagnetic phase transition as previously investigated by \cite{10.48550/arXiv.2110.06963}. The critical point is \(O(\log  n)\) in 1D and \(O(1)\) in 2D, and we indeed have observed such scaling in the simulation. Once we turn on the pinning potential in B to approach the purple dashed line, the problem becomes penetrating magnetic ordering into a magnet pulled in the reverse direction. Therefore, this becomes a first-order hysteretic phase transition. The overall teleporting phase diagram resembles a magnetic phase diagram shown in Fig. \ref{fig:stat_mech}(e), where the temperature and polarization are the control parameters.
\begin{figure}
\scalebox{0.5}{\includegraphics{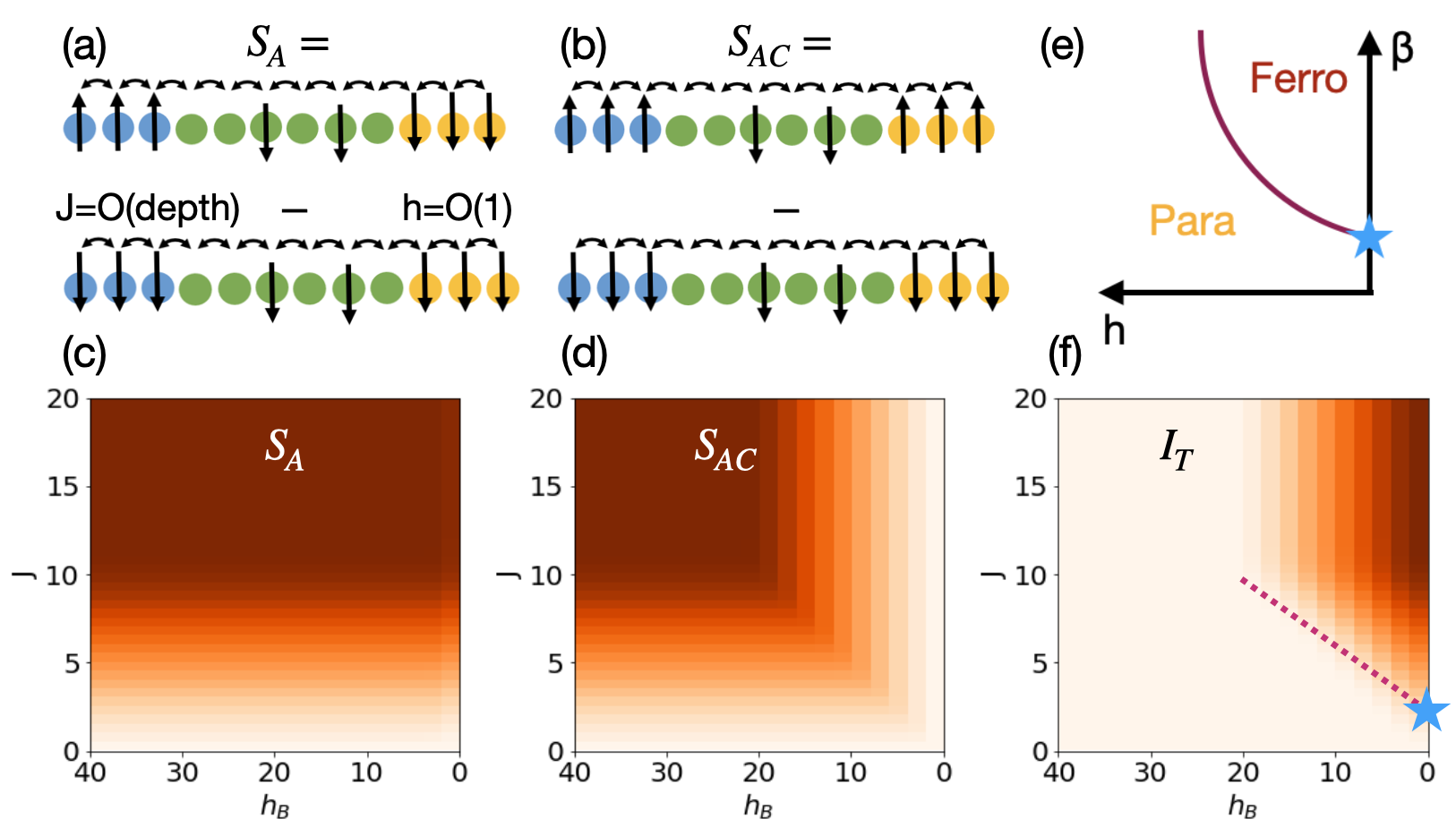}}
\caption{\label{fig:stat_mech}(a) Domain wall energy corresponding to \(S(A)\). (b) Domain wall energy corresponding to \(S(AC)\). (c,d,f) Calculated \(S(A)\), \(S(A)C\), and \(\delta I\) in the mean-field approximation. (blue star) second-order ferromagnetic transition. (red line) first-order hysteretic transition. (e) Analog to the ferromagnetic-paramagnetic phase diagram.}
\end{figure}
\subsection{Scaling of Finite Size Teleportation}\label{scaling}
We finally present additional numerics in this subsection to understand the scaling of the critical point and to support our argument. We calculate \(\delta I\) at the line crossing the critical point as shown in Fig. \ref{fig:finite_size_scaling}(a,d) for the case of measurement channels and erasure channels, respectively. We plot \(\delta I\) as a function of circuit depth at increasing system size in Fig. \ref{fig:finite_size_scaling}(b,e), from where we interpolate back to  \(\delta I=0\) to extract the critical point. We plot the scaling of critical point at increasing system size in Fig. \ref{fig:finite_size_scaling}(c,f). In the case of measurement channels, we observe the \(O(\log(n))\) scaling as expected from the 1D Ising model. In the case of erasure channels, we fit a power law to extract the scaling to be \(O(n^{0.41})\) which is also close to prediction of the Imry-Ma argument.

\begin{figure}
\scalebox{0.5}{\includegraphics{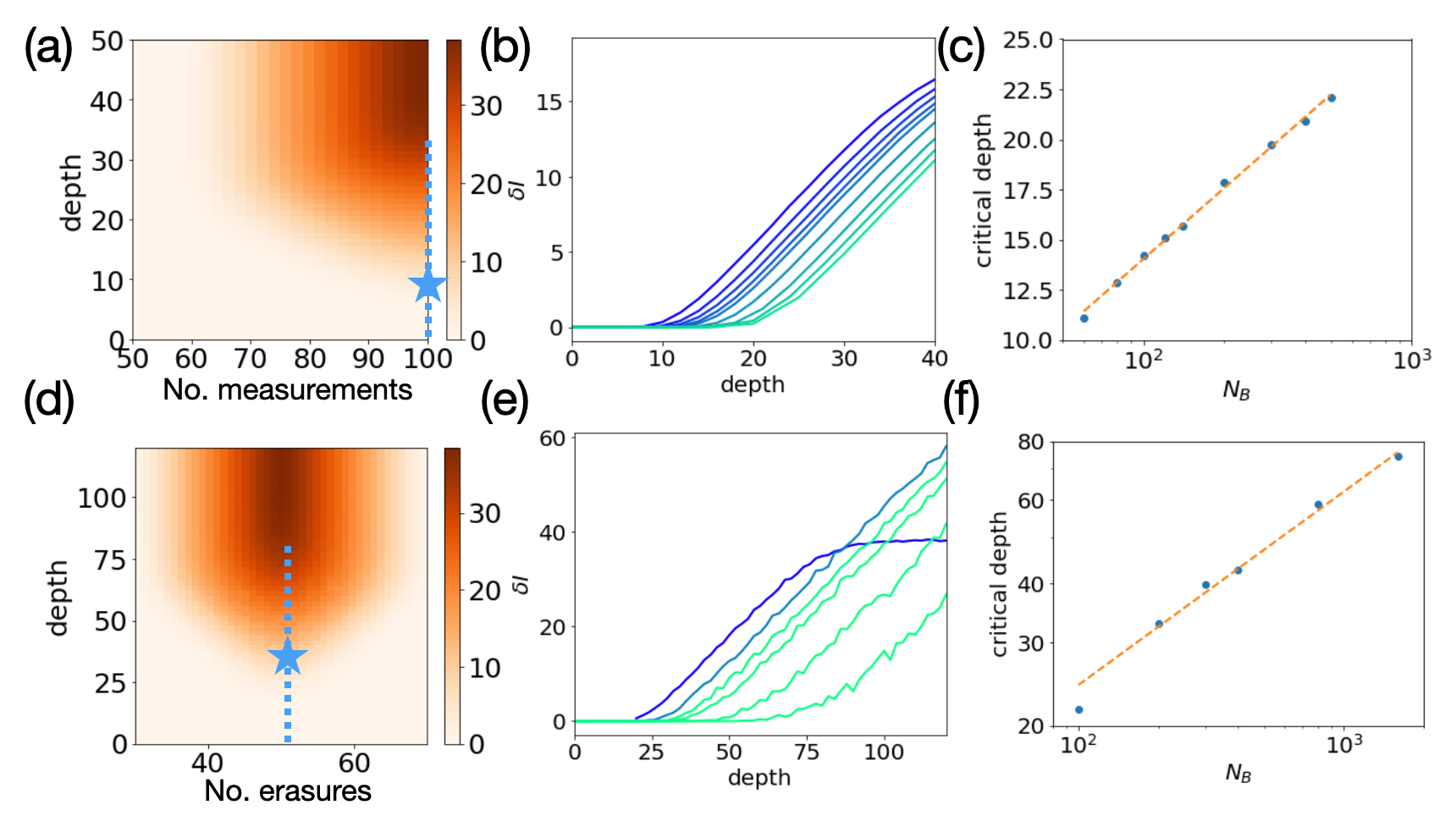}}
\caption{\label{fig:finite_size_scaling}(a) In the case of measurement channels, we probe the scaling numerically when the entire B is measured (dashed blue line), corresponding to the finite-time teleportation. (b) \(\delta I\) at the dashed blue line in (a) for \(N_B\) ranging from 60 to 400. (c) The extrapolated critical point (blue dots) and the logarithmic fit. (d) In the case of erasure channels, we probe the scaling numerically when half of B is measured (dashed blue line). (e) \(\delta I\) at the dashed blue line in (d) for \(N_B\) ranging from 100 to 1600. (c) The extrapolated critical point (blue dots) and the power-law fit with an exponent of 0.41 which is close to the expected value of 0.5.}
\end{figure}

\end{widetext}


\bibliography{apssamp}

\end{document}